\journal{Journal of Symbolic Computation}
\newtheorem{theorem}{Theorem}[section]
\newtheorem{prop}[theorem]{Proposition}
\newtheorem{cor}[theorem]{Corollary}
\newtheorem{lemma}[theorem]{Lemma}
\newtheorem{remark}[theorem]{Remark}
\newtheorem{example}[theorem]{Example}
\newenvironment{proof}{\noindent {\em Proof:\ }}{$\square$}
\newcommand{\Input}[1]
  {\noindent\begin{tabular}{@{}p{1.8cm}@{}p{13.2cm}@{}}
   {\bf Input: }&#1 \end{tabular}}
\newcommand{\Output}[1]
  {\noindent\begin{tabular}{@{}p{1.8cm}@{}p{13.2cm}@{}}
   {\bf Output: }&#1 \end{tabular}}
\def\k{{\rm k}}
\def\K{{\rm K}}
\def\GCD{{\rm gcd}}
\def\Q{{\mathbb{Q}}}
\def\F{{\mathbb{F}}}
\def\A{{\mathcal{A}}}
\def\gr{{Gr\"obner }}
\def\lif{{\bf if \,}}
\def\lthen{{\bf then \,}}
\def\lelse{{\bf else \,}}
\def\lendif{{\bf end if\,}}
\def\lfor{{\bf for \,}}
\def\lfrom{{\bf from \,}}
\def\lendfor{{\bf end for\,}}
\def\ldo{{\bf do \,}}
\def\lto{{\bf to \,}}
\def\lend{{\bf end \,}}
\def\lreturn{{\bf return \,}}
\def\lla{{\leftarrow}}
\def\lbegin{{\bf begin}}
\newcommand{\SPC}{\hspace*{15pt}}
\newcommand{\ignore}[1]{}
\begin{document}

\begin{frontmatter}



\title{An Efficient Algorithm for Factoring Polynomials over
Algebraic Extension Field}


\author{Yao Sun and Dingkang Wang\fnref{label1}}

\fntext[label1]{The authors are supported by NSFC 10971217,
10771206 60821002/F02.}

\address{Key Laboratory of Mathematics Mechanization, Academy of Mathematics and Systems Science, CAS, Beijing 100190,  China}

\ead{sunyao@amss.ac.cn, dwang@mmrc.iss.ac.cn}

\begin{abstract}
A new efficient algorithm is proposed for factoring polynomials
over an algebraic extension field. The extension field is defined
by a polynomial ring modulo a maximal ideal. If the maximal ideal
is given by its \gr basis, no extra \gr basis computation is
needed for factoring a polynomial over this extension field.
Nothing more than linear algebraic technique is used to get a
polynomial over the ground field by a generic linear map. Then
this polynomial is factorized over the ground field. From these
factors, the factorization of the polynomial over the extension
field is obtained.  The new algorithm has been implemented and
computer experiments indicate that the new algorithm is very
efficient, particularly in complicated examples.
\end{abstract}

\begin{keyword}
algorithm, factorization, algebraic extension field.

\end{keyword}

\end{frontmatter}




\section{ Introduction}


Factorization of polynomials over algebraic extension fields has
been widely investigated and there are polynomial-time algorithms
for factoring multivariate polynomial over algebraic number field
\citep{abbottbd,abbottd,enc1, landau,Lenstra,trager}. However, all
the existing algorithms for factoring polynomials over algebraic
extension field are not so efficient.

Factorization over  algebraic extension fields is needed for
irreducible decomposition of algebraic variety by using
characteristic set  method \citep{wu,wu2}. In
\citep{wangdm,wangl}, Wang and Lin proposed a very good algorithm
for factoring multivariate polynomials over  algebraic fields
obtained from successive extensions of the filed of rational
numbers. This problem has been further investigated by Li and Yuan
in \citep{libh,yuan}. Li's algorithm decomposes ascending chain
into irreducible ones directly and Yuan's algorithm follows
Trager's method \citep{trager}. Their methods   involve the
computation of characteristic set, \gr basis or resultant of
multivariate polynomial system and all these computations are
quite expensive. Rouillier's approach can also deduce an algorithm
for the same aim \citep{Rouillier}. All the above algorithms are
probabilistic, and if the characteristic of the ground field is
$0$, the algorithms terminate in a finite steps with probability 1
\citep{gaochou, wangl}. Besides, A. Steel gave his factorization
method in another way when the characteristic of the field is
positive and he concentrated on how to conquer the inseparability
\citep{Ste05}.

At present, popular methods for factoring polynomials over
extension field are to calculate the primitive element of the
extension field first and factor the polynomials over algebraic
number field afterwards. However, we propose a new factorization
algorithm in a different way. The main purpose of the current
paper is to present a new algorithm to solve the following
factorization problem:

Let $k$ be a perfect computable field and $k[x_1,\cdots,x_n]$ the
polynomial ring in indeterminate $\{x_1,\cdots,x_n\}$ with
coefficients in $k$. Let $I \subset k[x_1, \cdots, x_n]$ be a
maximal ideal such that $\K=k[x_1,\cdots,x_n]/I $ is indeed an
algebraic extension field of $k$. For a polynomial $f \in \K[y]$,
we will derive a new efficient algorithm for factoring $f$ over
the  field $\K$.

The above problem can be converted to univariate polynomial
factorization over the ground field $k$ by using a generic linear
map. If the maximal ideal $I$ is represented by its \gr basis for
any admissible order, no extra \gr basis computation is needed in
the new algorithm.

In \citep{Mon}, Monico proposed a new approach for computing a
primary decomposition of a zero dimensional ideal. This idea also
plays an important role in the new proposed algorithm. However,
Monico's algorithm is not complete, i.e. the components in the
output of Monico's algorithm can not be assured to be primary. The
new algorithm overcomes this flaw when applying Monico's idea to
the above factorization problem, i.e. the irreducible factors can
be verified without extra computations.

This paper is organized as follow. Some necessary preliminaries is
given in section 2. In section 3, we show how the problem of
polynomial factorization over algebraic extension field, which is
proposed in \citep{wangdm, wangl, wu,wu2}, can be transformed to a
univariate factorization problem. A new algorithm for factoring
polynomials over algebraic extension field is presented in section
4. Examples and comparisons appear in section 5 and section 6
respectively. Finally, we conclude this paper in section 7.



\section{Preliminaries}


Let $k$ be a perfect field which admits efficient operations and
factorization of univariate polynomials. Let $R$ be a multivariate
polynomial ring over the field $k$ and $Q$ an ideal of $R$. Let
$\A_k(Q)=R/Q$ denote the quotient ring.

Since we can add elements of $\A_k(Q)$ and multiply elements with
scalars in $k$, $\A_k(Q)$ has the structure of a vector space over
the field $k$. Furthermore, if $Q$ is zero dimensional, then
$\A_k(Q)$ is a finite dimensional vector space.

Given a polynomial $r \in R$, we define a map $m_r$ from $\A_k(Q)$
to itself by multiplication: $$m_r: \A_k(Q) \longrightarrow
\A_k(Q)$$
$$[g]\mapsto [rg],$$ where $[p]$ denotes the class in $\A_k(Q)$ of any
polynomial $p\in R$.

Here are the main properties of the map $m_r$.

\begin{prop} \label{polymap}
Let $r \in R$. Then
\begin{enumerate}

\item \label{p1} $m_r$ is a linear map from $\A_k(Q)$ to
$\A_k(Q)$.

\item \label{p2} $m_r=m_g$ exactly when $r-g \in Q$. In
particular, $m_r$ is the zero map exactly when $r \in Q$.

\item \label{p3} Let $q$ be a univariate polynomial over $k$. Then
$m_{q(r)}=q(m_r)$.

\item \label{p4}  If $p_r$ is the characteristic polynomial of
$m_r$, then $p_r(r) \in Q$.
\end{enumerate}
\end{prop}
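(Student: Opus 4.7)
The proof of Proposition~\ref{polymap} proceeds part by part, with each later part building on the earlier ones; no deep idea is needed beyond recognising that $m_r$ is just the ``multiplication by $r$'' operator on the quotient and then invoking the Cayley--Hamilton theorem at the end.

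For part~(\ref{p1}), the plan is to note that the addition and $k$-scalar multiplication on $\A_k(Q)=R/Q$ are defined class-wise and that multiplication in $R$ distributes over addition and commutes with $k$-scalars. Hence $m_r([g]+c[h]) = [r(g+ch)] = [rg]+c[rh] = m_r([g])+c\,m_r([h])$. This is a direct, one-line verification.

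For part~(\ref{p2}), I would argue both directions. If $r-g\in Q$, then for every $[h]\in\A_k(Q)$ the element $rh-gh=(r-g)h$ lies in $Q$ (since $Q$ is an ideal), so $m_r([h])=m_g([h])$. Conversely, if $m_r=m_g$, evaluate both sides at $[1]$ to obtain $[r]=[g]$, i.e.\ $r-g\in Q$. Taking $g=0$ gives the ``in particular'' clause.

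Part~(\ref{p3}) follows once I record the basic identities $m_r\circ m_s = m_{rs}$ and $m_{r+s}=m_r+m_s$, both of which are immediate from the definition of $m_\cdot$. By an easy induction, $m_r^i = m_{r^i}$ for every $i\ge 0$ (with $m_r^0$ being the identity map $m_1$). Writing $q=\sum_i a_i y^i$ with $a_i\in k$, linearity gives
\[
q(m_r) \;=\; \sum_i a_i\, m_r^i \;=\; \sum_i a_i\, m_{r^i} \;=\; m_{\sum_i a_i r^i} \;=\; m_{q(r)}.
\]

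Part~(\ref{p4}) is where I would invoke Cayley--Hamilton: since $\A_k(Q)$ is assumed finite-dimensional over $k$ (the setting in which $p_r$ is defined), Cayley--Hamilton yields $p_r(m_r)=0$ as an endomorphism of $\A_k(Q)$. By part~(\ref{p3}), $p_r(m_r)=m_{p_r(r)}$, and by part~(\ref{p2}), $m_{p_r(r)}=0$ forces $p_r(r)\in Q$. The only step that is genuinely substantial here is Cayley--Hamilton, which is a black-box result from linear algebra, so no real obstacle is expected; the main thing to be careful about is using~(\ref{p3}) in the form $p_r(m_r)=m_{p_r(r)}$ rather than trying to evaluate $p_r$ at $r$ directly in $R$.
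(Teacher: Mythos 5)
Your proof is correct and, for the only part the paper actually argues (part~(\ref{p4})), it follows the paper's reasoning exactly: Cayley--Hamilton gives $p_r(m_r)=0$, part~(\ref{p3}) rewrites this as $m_{p_r(r)}=0$, and part~(\ref{p2}) concludes $p_r(r)\in Q$. For parts~(\ref{p1})--(\ref{p3}) the paper simply cites a textbook, whereas you supply the routine verifications directly; these are standard and your arguments are sound.
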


\begin{proof}
For the proofs of part (\ref{p1}), (\ref{p2}) and (\ref{p3}),
please see \citep{CLO}. For the part (\ref{p4}), since $p_r$ is
the characteristic polynomial of the linear map $m_r$,
$p_r(m_r)=0$ by Cayley-Hamilton Theorem. According to part
(\ref{p3}), it follows that $m_{p_r(r)}=p_r(m_r)=0$.  Thus,
$p_r(r)$ belongs to the ideal $Q$ by part (\ref{p2}).
\end{proof}

\begin{prop} \label{minpol}
If $Q$ is a maximal ideal of $R$, then the {\em minimal}
polynomial of $m_r$ is irreducible over $k$.
\end{prop}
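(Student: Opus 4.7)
The plan is to argue by contradiction, exploiting the fact that a maximal ideal is prime together with the multiplicative property of the map $r \mapsto m_r$ established in the previous proposition.

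First I would let $\mu_r(t) \in k[t]$ denote the minimal polynomial of the linear map $m_r$. By Proposition \ref{polymap} part (\ref{p3}), $m_{\mu_r(r)} = \mu_r(m_r) = 0$, and then by part (\ref{p2}) this gives $\mu_r(r) \in Q$.

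Next, suppose for contradiction that $\mu_r$ factors nontrivially over $k$ as $\mu_r(t) = f(t)\,g(t)$ with $\deg f, \deg g < \deg \mu_r$. Evaluating at $r$ inside $R$ yields $f(r)\,g(r) = \mu_r(r) \in Q$. Since $Q$ is maximal, $R/Q$ is a field, so $Q$ is prime; hence either $f(r) \in Q$ or $g(r) \in Q$. Applying parts (\ref{p2}) and (\ref{p3}) of Proposition \ref{polymap} once more, this translates to $f(m_r) = m_{f(r)} = 0$ or $g(m_r) = m_{g(r)} = 0$, contradicting the minimality of $\mu_r$ since $f$ and $g$ have strictly smaller degree.

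The argument is essentially a soft one and I do not anticipate a genuine obstacle: the only point that needs care is keeping the two worlds straight, namely the ideal relation $\mu_r(r) \in Q$ happening inside $R$ versus the operator identity $\mu_r(m_r) = 0$ happening on $\A_k(Q)$. The bridge between them is provided entirely by parts (\ref{p2}) and (\ref{p3}) of the preceding proposition, so no new computation is required.
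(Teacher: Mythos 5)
Your proof is correct, and it takes a genuinely different and more elementary route than the paper. The paper introduces a new indeterminate $z$, forms the ideal $\langle Q, z-r\rangle$ in $k[x_1,\dots,x_n,z]$, argues that this ideal and its contraction $\langle Q, z-r\rangle \cap k[z]$ are both maximal, and then identifies the monic generator $g$ of that contraction as the minimal polynomial of $m_r$; irreducibility of $g$ comes for free because generators of maximal ideals in $k[z]$ are irreducible. Your argument instead works directly with the minimal polynomial $\mu_r$ and uses only that a maximal ideal is prime: a nontrivial factorization $\mu_r = f\,g$ would give $f(r)\,g(r)\in Q$, forcing $f(r)\in Q$ or $g(r)\in Q$, which by parts (2) and (3) of Proposition \ref{polymap} produces a smaller-degree annihilating polynomial of $m_r$ --- a contradiction. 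What the paper's route buys is an explicit description of $\mu_r$ as the eliminant of $\langle Q, z-r\rangle$ with respect to $z$, which is computationally suggestive (it hints at how to compute the minimal polynomial via elimination); what your route buys is brevity and self-containment, avoiding the need to check that the contraction of a maximal ideal to $k[z]$ is again maximal (a true but not entirely free fact, ultimately relying on Zariski's lemma). One minor polish: you should note explicitly that a nontrivial factorization over a field forces both $\deg f$ and $\deg g$ to be strictly positive, so each is indeed strictly smaller than $\deg\mu_r$; you assert this in passing but it is the precise point that yields the contradiction.
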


\begin{proof} Assume $R$ is the polynomial ring $k[x_1,\cdots,x_n]$.
Let $\langle Q, z-r\rangle$ be the ideal generated by $Q$ and
$z-r$ over the polynomial ring $k[x_1,\cdots,x_n, z]$, where $z$
is a new indeterminate. Since $Q$ is a maximal ideal in
$k[x_1,\cdots,x_n]$, it follows that $\langle Q,z-r\rangle$ is
also a maximal ideal in $k[x_1,\cdots,x_n,z]$ and so is the ideal
$\langle Q,z-r\rangle \cap k[z]$.

To study the ideal $\langle Q,z-r\rangle \cap k[z]$, let $g$ be
the monic generator of the principal ideal $\langle Q, z-r\rangle
\cap k[z]$. Substitute the indeterminate $z$ by $r$ in $g$, then
$g(r) \in \langle Q,z-r\rangle \cap k[x_1, \cdots, x_n] = Q$,
which means $g(m_r)=m_{g(r)}=0$ by proposition \ref{polymap}.
Since $\langle Q,z-r\rangle \cap k[z]$ is maximal in $k[z]$, $g$
is irreducible over $k$, and hence $g$ is the minimal polynomial
of $m_r$.
\end{proof}

The following proposition, which is a basic conclusion from
standard linear algebra, illustrates the relationship between {\em
minimal} polynomial and {\em characteristic} polynomial.

\begin{prop}\label{minchar}
The minimal polynomial of $m_r$ and its characteristic polynomial
share the same irreducible factors.
\end{prop}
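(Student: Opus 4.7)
The plan is to prove the two inclusions separately, working with the irreducible factorizations of both polynomials over $k$.

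For the easy direction, I would invoke the Cayley-Hamilton theorem: the characteristic polynomial $p_r$ annihilates $m_r$, so by definition of the minimal polynomial $\mu_r$ of $m_r$, we have $\mu_r \mid p_r$ in $k[z]$. Consequently, every irreducible factor of $\mu_r$ is an irreducible factor of $p_r$.

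For the reverse direction, let $q \in k[z]$ be an irreducible factor of $p_r$; I need to show that $q \mid \mu_r$. The cleanest route is to pass to the algebraic closure $\bar k$. A standard linear algebra fact says that $\mu_r$ and $p_r$ have the same roots in $\bar k$ (the roots of $\mu_r$ are eigenvalues of $m_r$, and conversely any eigenvalue $\lambda$ of $m_r$ satisfies $\mu_r(\lambda)=0$, since applying $\mu_r(m_r)=0$ to a $\lambda$-eigenvector gives $\mu_r(\lambda)v=0$). Thus $q$, being an irreducible factor of $p_r$, has some root $\alpha \in \bar k$, and that same $\alpha$ is a root of $\mu_r$. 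Since $q$ is irreducible over $k$ and $\alpha$ is a root of $q$, $q$ is (up to a unit) the minimal polynomial of $\alpha$ over $k$, hence $q \mid \mu_r$ in $k[z]$.

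Combining the two divisibility conclusions, $\mu_r$ and $p_r$ are divisible by exactly the same irreducible polynomials in $k[z]$, which is the stated claim.

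I do not expect a real obstacle here, as this is a textbook fact; the only mild subtlety is the lemma that eigenvalues coincide with roots of the minimal polynomial, which I handle above by applying $\mu_r(m_r)=0$ to an eigenvector. The statement could alternatively be deduced from the rational canonical form (whose invariant factors have $\mu_r$ as the last factor and $p_r$ as their product), but the eigenvalue argument above is shorter and does not require introducing extra machinery.
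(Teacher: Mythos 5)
Your argument is correct and is the standard textbook proof. Note, however, that the paper itself offers no proof of this proposition at all: it simply labels it ``a basic conclusion from standard linear algebra'' and moves on. So you have supplied a proof where the authors chose to cite folklore, and the route you chose (Cayley--Hamilton for $\mu_r \mid p_r$, then eigenvalues over $\bar k$ and irreducibility of $q$ to get the reverse divisibility) is exactly the argument one would expect a linear algebra text to give. One small point worth making explicit, if you wanted to be fully rigorous: to speak of eigenvalues and eigenvectors in $\bar k$ you should extend scalars, replacing $m_r$ by $m_r \otimes 1$ on $\A_k(Q)\otimes_k \bar k$; the characteristic polynomial is unchanged under base extension, and $\mu_r(m_r\otimes 1)=0$ still holds, so your eigenvector computation goes through on the extended space. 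With that caveat supplied, the proof is complete and clean, and shorter than the rational-canonical-form alternative you mention.
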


Thus we have an instant corollary of Proposition \ref{minpol}.

\begin{cor}\label{charpol}
If $Q$ is a maximal ideal of $R$, then the {\em characteristic}
polynomial of $m_r$ is a power of a polynomial which is
irreducible over $k$.
\end{cor}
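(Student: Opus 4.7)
The plan is to deduce this corollary directly from the two immediately preceding results, namely Proposition \ref{minpol} and Proposition \ref{minchar}, so no new machinery is required. First I would invoke Proposition \ref{minpol}: because $Q$ is maximal, the minimal polynomial $g$ of $m_r$ is irreducible over $k$, so in particular its factorization into irreducibles consists of the single factor $g$ itself (up to a scalar).

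Next I would apply Proposition \ref{minchar}, which says that the minimal and characteristic polynomials of $m_r$ have the same set of irreducible factors. Since the minimal polynomial has only $g$ as an irreducible factor, the characteristic polynomial $p_r$ of $m_r$ can have only $g$ as an irreducible factor as well. Because $p_r$ is monic (as a characteristic polynomial) and $k[z]$ is a UFD, this forces $p_r = g^m$ for some positive integer $m$, which is precisely the claim that $p_r$ is a power of a polynomial irreducible over $k$.

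There is essentially no obstacle here; the work has all been done in the preceding two results, and the corollary is just their conjunction. If I wanted to be a bit more explicit, I could note that $m$ is in fact determined by a dimension count, namely $m \deg(g) = \dim_k \A_k(Q)$, but this is not needed for the stated conclusion.
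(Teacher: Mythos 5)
Your proof is correct and is exactly the argument the paper intends: the corollary is stated as "an instant corollary of Proposition \ref{minpol}" via Proposition \ref{minchar}, which is precisely the combination you spell out. Nothing to add.
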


With the above propositions, next we study more properties about
the characteristic polynomial of $m_r$.

Let now suppose that $Q$ is a zero dimensional radical ideal of
$R$ and $Q$ has a minimal prime decomposition:
$$Q = Q_1 \cap \cdots \cap Q_t,$$ where each $Q_i$ is a prime
ideal of $R$.

We define the linear map $m_{r, i}$ in the same fashion as $m_r$.
Denote $\A_k(Q_i)=R/Q_i$ for $i=1,\cdots,t$, and consider the
linear maps: $$m_{r,i}: \A_k(Q_i) \longrightarrow \A_k(Q_i)$$
$$[g] \mapsto [rg],$$ where $[p]$ denotes the class in $\A_k(Q_i)$ of
any polynomial $p\in R$.

The following proposition proposed by Monico \citep{Mon} describes
the relationship between the characteristic polynomials of $m_r$
and $m_{r,i}$'s.

\begin{prop}\label{cps}
Let $p_r, p_{r,i}$ be the characteristic polynomial of
$m_r,m_{r,i}$ respectively. Then
$$p_{r}=p_{r,1} \cdots p_{r,t}.$$
\end{prop}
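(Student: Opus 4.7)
The plan is to exploit the Chinese Remainder Theorem. Since $Q$ is zero dimensional and radical, the quotient $R/Q$ is a finite-dimensional reduced $k$-algebra, hence Artinian, so each minimal prime $Q_i$ in the decomposition is in fact a maximal ideal. Distinct maximal ideals are pairwise comaximal, which is precisely the hypothesis needed for CRT.

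First I would invoke CRT to obtain the $k$-algebra isomorphism
$$\phi : \A_k(Q) \;\longrightarrow\; \A_k(Q_1) \times \cdots \times \A_k(Q_t), \qquad [g] \mapsto ([g]_1, \ldots, [g]_t),$$
where $[g]_i$ denotes the class of $g$ in $R/Q_i$. This is in particular an isomorphism of $k$-vector spaces. In terms of dimensions it gives $\dim_k \A_k(Q) = \sum_i \dim_k \A_k(Q_i)$, so the degrees on both sides of the claimed identity at least match.

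Next I would check that $\phi$ intertwines the multiplication maps, i.e.\ that the diagram formed by $m_r$ on the left and $m_{r,1} \oplus \cdots \oplus m_{r,t}$ on the right commutes. This is immediate from the definitions: for any $[g] \in \A_k(Q)$,
$$\phi(m_r([g])) = \phi([rg]) = ([rg]_1, \ldots, [rg]_t) = (m_{r,1}[g]_1, \ldots, m_{r,t}[g]_t).$$
Therefore, if I choose a $k$-basis of $\A_k(Q)$ that is the union (under $\phi^{-1}$) of bases of the factors $\A_k(Q_i)$, the matrix of $m_r$ is block diagonal with blocks representing the $m_{r,i}$.

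Finally, the characteristic polynomial of a block diagonal linear map is the product of the characteristic polynomials of the blocks, yielding $p_r = p_{r,1} \cdots p_{r,t}$. I do not anticipate a genuine obstacle; the only subtlety is the initial observation that in the zero-dimensional radical setting the minimal primes are automatically maximal, which is what makes CRT available. Everything else is linear algebra.
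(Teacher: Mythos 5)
Your proposal is correct and takes essentially the same route as the paper: Chinese Remainder Theorem to decompose $\A_k(Q)$ as a product of the $\A_k(Q_i)$, observe that the multiplication map becomes block diagonal under this isomorphism, and conclude by multiplicativity of characteristic polynomials over block-diagonal matrices. Your preliminary remark that the minimal primes of a zero-dimensional radical ideal are maximal (hence pairwise comaximal) is a useful explicit justification for CRT that the paper leaves implicit.
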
 

\ignore{
\begin{proof}
The following  is a slight modification of the proof given by C.
Monica. For the original proof of this proposition, please see
\citep{Mon}.

Consider the map: $$\delta: \A_k(Q) \rightarrow \A_k(Q_1) \times
\cdots \times \A_k(Q_t)$$ $$[g] \mapsto ([g],\cdots,[g]).$$ We see
that $\delta$ is an isomorphism map by Chinese Remainder Theorem.

Notice that $m_r$ is a linear map from $\A_k(Q)$ to itself.  Set
$m_r'=\delta m_r \delta^{-1},$ and $m_r'$ is a linear map from
$\A_k(Q_1) \times \cdots \times \A_k(Q_t)$ to itself. Since
$\delta$ is an invertible linear map from $\A_k(Q)$ to $\A_k(Q_1)
\times \cdots \times \A_k(Q_t)$, the characteristic polynomials of
$m_r'$ and $m_r$ are identical.

It is easy to check $$m_r'([g_1] ,\cdots,[g_t] )=([rg_1]
,\cdots,[rg_t] ).$$ In particular, for any
$(0,\cdots,0,[g_i],0,\cdots,0)\in \A_k(Q_1)\times \cdots \times
\A_k(Q_t)$, one has
$$m_r'(0,\cdots,0,[g_i],0,\cdots,0)=(0,\cdots,0,[rg_i],0,\cdots,0)$$
whence $(0,\cdots, \A_k(Q_i), \cdots, 0)$ is an $m_r'$-invariant
subspace of $\A_k(Q_1) \times \cdots \times \A_k(Q_t)$.

Let $B_i=\{b_{i,1},\cdots,b_{i,l_i} \}$ be a basis of the vector
space $\A_k(Q_i)$, and let $M_{r,i}$ be the matrix of $m_{r,i}$
relative to the basis $B_i$. Set
$$B=\{(b_{1,1},0,\cdots,0),\cdots,(b_{1,l_1},0,\cdots,0),\cdots, (0,\cdots,0,b_{t,1})
,\cdots,(0,\cdots,0,b_{t,l_t})\}.$$ It is obvious that $B$ is a
basis of $\A_k(Q_1) \times \cdots \times \A_k(Q_t)$. Therefore,
the matrix of $m_r'$, denoted by $M_r'$, relative to the basis $B$
has the form:
$$M_r'=\left(\begin{array}{cccc}M_{r,1}\\
\ &M_{r,2}\\ \ &\ &\ddots\\
\ &\ &\ &M_{r,t}\\ \end{array}\right)$$ Let $p'_r \in k[\lambda]$
be the characteristic polynomial of $M_r'$ or $m_r'$,  and by the
hypothesis, $p_{r,i}$ is the characteristic polynomial of
$m_{r,i}$, which is also the characteristic polynomial of
$M_{r,i}$. We can obtain
$$p'_r=p_{r,1} \cdots p_{r,t}.$$
Since $m_r$ and $m_r'$ are similar linear maps, then it holds that
$p_r=p_r'$. Combined with our previous discussion, we have
$$p_r = p_{r,1} \cdots p_{r,t}$$ which is just what we need.
\end{proof}}



\section{Factorization of Polynomials over Algebraic Extension Field}


In this section, we will discuss the main ideas about the new
factorization method. First of all, we need some new notations.
Throughout this section, let $R=k[x_1,\cdots, x_n]$ and
$R_y=k[x_1,\cdots,x_n,y]$. $I$ is a {\em maximal} ideal in $R$ and
$I_y$ is the ideal generated by $I$ over the polynomial ring
$R_y$. Since $I$ is a maximal ideal, the quotient ring $R/I$ is
indeed a field. For convenience, we denote $\K=R/I$, which is a
finite extension field of $k$. Remark that the quotient ring
$R_y/I_y$ is not a field, as $I_y$ is not a maximal ideal in $R_y$
any more.

The ring $\K[y]$, which is a polynomial ring over $\K$ with the
indeterminate $y$, is a principal ideal domain, so each polynomial
$f$ in $\K[y]$ has a unique factorization over $\K$. What we will
do next is to give an efficient algorithm to calculate the
factorization of $f$ in $\K[y]$.

In order to exploit the properties of $I$, we should connect the
ring $R_y$ and  $\K[y]$. Consider the canonical map:
$$\sigma : R \longrightarrow \K=R/I$$ $$c \longmapsto [c],$$ which sends
a polynomial $c \in R$ to $[c] \in \K$. And $\sigma$ extends
canonically onto $R_y$ by applying $\sigma$ coefficient-wise. By
definition, $\sigma(g)=0$ if and only if $g \in I_y$ for any $g\in
R_y$.

Conversely, given an element $c \in \K$, we say a polynomial $d
\in R$ is a {\em lift} of $c$ if $\sigma(d)=c$. Similarly, we say
$h \in R_y$ is a {\em lift} of $g \in \K[y]$ if $\sigma(h)=g$
holds. Clearly, an element $c \in \K$ (or $g \in \K[y]$) may have
infinite distinct lifts, as the map $\sigma$ is not injective. Pay
attention that, the lifts of $g \in \K[y]$ may have different
degrees in $y$.

Since $\K=k(\alpha_1,
\cdots,\alpha_n)=k[\alpha_1,\cdots,\alpha_n]$, the elements in
$\K$ have polynomial forms in the letters $\alpha_1,
\alpha_2,\cdots,\alpha_n$, i.e. for $ g \in \K[y]$, $g$ has the
following form:
$$g=\sum\limits_{i=0}^d c_i(\alpha) y^i, $$
where $c_i(\alpha) \in \k[\alpha_1,\cdots,\alpha_n]$ for $i=0,
\cdots, d$. Let $h =\sum\limits_{i=0}^d c_i(x) y^i$, where $c_i(x)
\in k[x_1,\cdots,x_n]$ such that $\sigma(c_i(x))=c_i(\alpha)$. It
is easy to check $\sigma(h)=g$, and we call $h$ a {\em natural
lift} of $g$.

Let $F$ be a set of polynomials in $R_y$, the ideal generated by
$F$ over $R_y$ is denoted by $\langle F\rangle_{R_y}$ as usual.

In the rest of this paper, we always make the following
assumptions:
\begin{enumerate}
\item[$\bullet$] $f$ is a squarefree polynomial in $\K[y]$ and
$h\in R_y$ is a lift of $f$.

\item[$\bullet$] $Q=\langle I, h\rangle_{R_y} \subset R_y$ and
$\A_k(Q)=R_y/Q$.

\item[$\bullet$] For $r \in R_y$, the linear map $m_r$ is defined
from $\A_k(Q)$ to $\A_k(Q)$ as in the last section.

\item[$\bullet$] $f=f_1\cdots f_t$ is an irreducible factorization
of $f$ over $\K$ and $h_i$ is a lift of $f_i$.

\item[$\bullet$] $m_{r,i}$ is the linear map defined from
$\A_k(Q_i)$ to $\A_k(Q_i)$, where $\A_k(Q_i)=R_y/Q_i$ and
$Q_i=\langle I, h_i\rangle_{R_y}$ for $i=1,\cdots,t$.

\item[$\bullet$] $p_r$ and $p_{r,i}\in k[\lambda]$ are the
characteristic polynomials of $m_r$ and $m_{r,i}$ respectively.
\end{enumerate}

Now it is time to describe the main ideas of the new algorithm for
factoring $f$ over $\K[y]$. The following lemma builds a relation
between the factorization of a squarefree polynomial and the
minimal decomposition of a radical ideal.

\begin{lemma}\label{prop}
$Q=\langle I, h\rangle_{R_y} \subset R_y$ is a radical ideal and
$$Q=Q_1 \cap \cdots \cap Q_t$$ is a minimal
prime decomposition of $Q$, where $Q_i=\langle I,
h_i\rangle_{R_y}$ for $i=1,\cdots,t$.
\end{lemma}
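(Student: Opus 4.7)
The plan is to transport the claim through the canonical ring isomorphism
$\bar{\sigma}: R_y / I_y \longrightarrow \K[y]$
induced by $\sigma$, under which everything reduces to the trivial factorization statement in the principal ideal domain $\K[y]$. Since $I_y$ is generated by elements involving only $x_1,\dots,x_n$, one checks directly that $R_y/I_y \cong (R/I)[y] = \K[y]$, and ideals of $R_y$ containing $I_y$ correspond bijectively (by contraction and extension along $\sigma$) to ideals of $\K[y]$, preserving intersections, primality, and maximality.

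First I would verify the key correspondence: $\sigma^{-1}(\langle f\rangle_{\K[y]}) = \langle I,h\rangle_{R_y} = Q$, and analogously $\sigma^{-1}(\langle f_i\rangle_{\K[y]}) = Q_i$. The inclusion $\langle I,h\rangle \subseteq \sigma^{-1}(\langle f\rangle)$ is immediate since $\sigma(I)=0$ and $\sigma(h)=f$. For the reverse, take $g \in R_y$ with $\sigma(g)=f\cdot p$ for some $p\in \K[y]$; lift $p$ to some $p' \in R_y$ with $\sigma(p')=p$. Then $\sigma(g - hp') = fp - fp = 0$, so $g - hp' \in I_y \subseteq \langle I,h\rangle$, whence $g \in \langle I,h\rangle$. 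The same argument with $h$ replaced by $h_i$ handles $Q_i$.

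Next I would invoke the factorization in $\K[y]$. Because $\K[y]$ is a PID and $f = f_1 \cdots f_t$ is squarefree with the $f_i$ pairwise non-associate irreducibles, each $\langle f_i\rangle$ is a maximal (hence prime) ideal, no $\langle f_i\rangle$ contains another, and the Chinese Remainder Theorem gives
$$\langle f\rangle_{\K[y]} \;=\; \langle f_1\rangle_{\K[y]} \cap \cdots \cap \langle f_t\rangle_{\K[y]},$$
which is a minimal prime decomposition, and in particular $\langle f\rangle$ is radical. Pulling this identity back through $\sigma^{-1}$ and using that $\sigma^{-1}$ preserves intersections, primality and incomparability yields
$$Q \;=\; Q_1 \cap \cdots \cap Q_t,$$
with each $Q_i$ prime (even maximal) and no $Q_i$ contained in another, i.e.\ a minimal prime decomposition. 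Radicality of $Q$ follows either from being a finite intersection of prime ideals or from the radicality of $\langle f\rangle$ via the correspondence.

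The main obstacle, and the only step requiring real work, is the first one: pinning down that the contracted ideal $\sigma^{-1}(\langle f\rangle)$ is generated exactly by $I$ together with the chosen lift $h$, rather than by $I$ together with some other representative of $f$ modulo $I_y$. Everything else is a formal transfer through the isomorphism $R_y/I_y \cong \K[y]$ together with standard facts about principal ideal domains.
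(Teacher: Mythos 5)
Your proof is correct, and it takes a somewhat different (more conceptual) route than the paper's. The paper establishes the lemma by element-wise bookkeeping in $R_y$: it shows $Q$ is radical by taking an arbitrary $g^m\in Q$, writing $g^m=t+sh$, applying $\sigma$, using squarefreeness of $f$ to conclude $f\mid\sigma(g)$, and then lifting a cofactor back to $R_y$; the prime decomposition and its minimality are verified by the same kind of hands-on argument about divisibility of $\sigma(g)$ by the $f_i$. You instead package all of this into the ring isomorphism $R_y/I_y\cong\K[y]$ together with the standard correspondence between ideals of $R_y$ containing $I_y$ and ideals of $\K[y]$, reducing the whole lemma to the trivial fact that $\langle f\rangle=\bigcap_i\langle f_i\rangle$ is a minimal prime decomposition in the PID $\K[y]$. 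You correctly identify (and handle) the only nontrivial step, namely that $\sigma^{-1}(\langle f\rangle_{\K[y]})=\langle I,h\rangle_{R_y}$ rather than merely containing it; the paper proves exactly this fact implicitly inside its radicality argument, so the two proofs have the same mathematical core. What your approach buys is brevity and transparency: radicality, primality, the intersection, and minimality all drop out at once from the ideal correspondence, instead of each needing a separate element computation. What it costs is that you lean on the (elementary but not always spelled out) facts that $R[y]/I[y]\cong(R/I)[y]$ and that contraction along a surjection preserves intersections, primality, and incomparability; the paper's version is self-contained and does not invoke that machinery, which may be why the authors wrote it the longer way.
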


\begin{proof}
First, we begin by showing the definition of $Q=\langle I,
h\rangle_{R_y}$ is well defined. Suppose $h'$ is another lift of
$f$ in $R_y$. Then it suffices to show the two ideals $Q=\langle
I, h\rangle_{R_y}$ and $Q'=\langle I, h'\rangle_{R_y}$ are
identical. By the definition of lift, we have
$\sigma(h)=f=\sigma(h')$. Since ${\sigma}$ is a homomorphism map,
it follows that ${\sigma}(h-h')=0$, which means $h-h' \in I_y$ and
hence $Q=Q'$. Similarly, $Q_i$'s are also well defined for the
same reasons.

Next, we prove $Q=\langle I, h\rangle_{R_y}$ is a radical ideal of
$R_y$. If $g^m \in Q$ for some positive integer $m$, then $g^m$
has an expression $g^m=t+sh$, where $t \in I_y$ and $s \in R_y$.
Since ${\sigma}$ is a homomorphism map, then
$${\sigma}(g)^m={\sigma}(g^m)={\sigma}(t)+{\sigma}(s){\sigma}(h)={\sigma}(s)f,$$
which means $f  \mid  {\sigma}(g)^m$. Since $f$ is a squarefree
polynomial as assumed, $f \mid {\sigma}(g)^m$ implies $f \mid
{\sigma}(g)$. Let $\sigma(g)= b f $, $b\in \K[y]$ and $a\in R_y$ a
lift of $b$. Since $h$ is a lift of $f$, it follows that
$\sigma(g)=\sigma(a) \sigma(h)$, which means $\sigma(g-ah)=0$ and
hence $g -ah \in I_y$. So $g \in Q$, which shows $Q$ is a radical
ideal.

Similarly, it is easy to show $Q_i$ is a prime ideal by using the
property that $f_i$ is irreducible over $\K$ (hence squarefree),
and the proof is omitted here.

Finally, we finish this proof by showing the $Q_i$'s constitute a
minimal prime decomposition of $Q$.

On one hand, we have $f  \mid  \sigma(g)$ for any $g \in Q$. It
follows that $f_i \mid \sigma(g)$ for $i=1,\cdots,t$. Then $g$
belongs to each $Q_i$ as discussed above and hence lies in the
intersection of these $Q_i$'s.

On the other hand, given $g \in Q_1 \cap \cdots \cap Q_t$, it is
easy to see that $f_i  \mid  {\sigma}(g)$ for all
$i=1,2,\cdots,t$. Since $f_i$'s are irreducible factors of $f$ and
coprime with each other, it follows that $f=f_1f_2\cdots f_t \mid
{\sigma}(g)$, which means there exists $a\in R_y$ such that
$\sigma(g)=\sigma(a)f$ and hence $g-ah \in I_y$. Thus, $g\in Q$.

We have now proved that
$$Q = Q_1 \cap \cdots \cap
Q_t.$$ As $f_i$ and $f_j$ are distinct irreducible factors of $f$
whenever $i\neq j$, then $h_i \notin Q_j$  and $h_j \notin Q_i$,
which indicates the above decomposition is minimal.
\end{proof}

The following theorem is the main theorem of this paper which
provides a new method for factoring polynomials over algebraic
extension fields.

\begin{theorem}[Main Theorem] \label{thm-main}
With the notations defined as earlier. If the characteristic
polynomial $p_r$ of $m_r$ has an irreducible factorization:
$$p_{r}=q_{1}^{m_1} \cdots q_{s}^{m_s},$$ where $q_{i}$ is
irreducible over $k$ and $q_i\not=q_j$ whenever $i\not=j$, then
$\GCD(f, {\sigma}(q_{i}(r)))\not=1$ and
$$f=c \prod_{i=1}^s \GCD (f,\sigma(q_{i}(r))),$$ where $c$ is
constant in $\K$ and $\GCD(g_1,g_2)$ is the monic greatest common
divisor of $g_1$ and $g_2$ for any $g_1,g_2 \in \K[y]$.
Furthermore, if $m_i=1$, then $\GCD(f, {\sigma}(q_{i}(r)))$ is
irreducible over $\K$.
\end{theorem}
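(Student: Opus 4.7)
The plan is to decompose $p_r$ via the prime components $Q_i$ and then translate the resulting factorization into divisibilities among the $f_i$'s. First, observe that each $Q_i=\langle I,h_i\rangle_{R_y}$ is in fact \emph{maximal}: because $f_i$ is irreducible over $\K$, the quotient $R_y/Q_i$ is isomorphic to $\K[y]/\langle f_i\rangle$, which is a field. Corollary \ref{charpol} then forces $p_{r,i}=\tilde q_i^{e_i}$ for some irreducible $\tilde q_i\in k[\lambda]$, and by Propositions \ref{minpol} and \ref{minchar} this $\tilde q_i$ is precisely the minimal polynomial of $m_{r,i}$. Proposition \ref{cps} now gives $p_r=\tilde q_1^{e_1}\cdots\tilde q_t^{e_t}$. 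Comparing with the hypothesis $p_r=q_1^{m_1}\cdots q_s^{m_s}$, each $\tilde q_i$ coincides with a unique $q_{j(i)}$; setting $S_j=\{i:\tilde q_i=q_j\}$ yields a partition of $\{1,\dots,t\}$ satisfying $m_j=\sum_{i\in S_j}e_i$, with each $S_j$ nonempty because every $q_j$ must appear in $p_r$.

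The crux is then the divisibility correspondence: for every $q\in k[\lambda]$ and every $i$, one has $q(r)\in Q_i$ if and only if $\tilde q_i\mid q$. This follows from parts (\ref{p2}) and (\ref{p3}) of Proposition \ref{polymap}, since $q(r)\in Q_i$ is equivalent to $q(m_{r,i})=0$, which in turn holds exactly when the minimal polynomial $\tilde q_i$ of $m_{r,i}$ divides $q$. Moreover $\sigma$ sends $Q_i$ onto $\langle f_i\rangle\subset\K[y]$ (because $\sigma(I)=0$ and $\sigma(h_i)=f_i$), and $f_i$ is prime in $\K[y]$, so $f_i\mid\sigma(q_j(r))$ precisely when $q_j(r)\in Q_i$, i.e.\ precisely when $i\in S_j$.

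Since $f$ is squarefree with pairwise coprime irreducible factorization $f=f_1\cdots f_t$, the monic gcd $\GCD(f,\sigma(q_j(r)))$ is therefore a unit multiple (in $\K$) of $\prod_{i\in S_j}f_i$, and in particular it is nontrivial because $S_j\neq\emptyset$. Multiplying over $j=1,\dots,s$, the partition property of the $S_j$ makes the product collapse to a constant multiple of $f_1\cdots f_t=f$, giving $f=c\prod_j\GCD(f,\sigma(q_j(r)))$ for a suitable $c\in\K$. Finally, if $m_j=1$ then $|S_j|=1$ and the corresponding $e_i=1$, so $\GCD(f,\sigma(q_j(r)))$ is (up to a unit in $\K$) the single irreducible factor $f_i$, hence irreducible over $\K$. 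I expect the substantive work to lie in the divisibility correspondence in paragraph two; once that is in place, the rest is routine bookkeeping of leading coefficients.
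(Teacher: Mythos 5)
Your proof is correct, and it takes a genuinely cleaner route than the paper at the key step, so it is worth contrasting the two. Both proofs start from the same foundations: Lemma \ref{prop} (the minimal prime decomposition $Q=Q_1\cap\cdots\cap Q_t$), Proposition \ref{cps} ($p_r=\prod_i p_{r,i}$), and Corollary \ref{charpol} ($p_{r,i}$ a prime power). The paper then proceeds by a one-directional divisibility ($f_i\mid\sigma(p_{r,i}(r))$ from $p_{r,i}(r)\in Q_i$), establishes $f\mid\prod_j\GCD(f,\sigma(q_j(r)))$ from this, establishes the reverse divisibility by showing the $\sigma(q_j(r))$ are pairwise coprime so that the product of gcds is squarefree, and finally handles the $m_i=1$ case with a separate contradiction argument. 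You instead upgrade the one-directional divisibility into a complete correspondence: using that $Q_i$ is \emph{maximal} (not just prime, so that $\tilde q_i$ is the minimal polynomial of $m_{r,i}$), you show $q(r)\in Q_i\iff\tilde q_i\mid q$, then transport this along $\sigma$ to get $f_i\mid\sigma(q_j(r))\iff i\in S_j$. This hands you the closed formula $\GCD(f,\sigma(q_j(r)))=\prod_{i\in S_j}f_i$ up to a unit, from which the product identity, the nontriviality of each gcd, and the $m_j=1$ case all drop out at once from the partition $\{1,\dots,t\}=\bigsqcup_j S_j$ and the count $m_j=\sum_{i\in S_j}e_i$. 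Your argument is more informative (it names exactly which $f_i$ land in which factor) and avoids the paper's coprimality/squarefreeness bookkeeping and its ad hoc treatment of the $m_i=1$ case.

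One small wording issue: in the step where you claim $f_i\mid\sigma(q_j(r))\iff q_j(r)\in Q_i$, the remark ``and $f_i$ is prime in $\K[y]$'' is not what drives the equivalence. The forward direction is $\sigma(Q_i)=\langle f_i\rangle$; the reverse direction needs $Q_i=\sigma^{-1}(\langle f_i\rangle)$, which holds because $\sigma$ is surjective with kernel $I_y\subseteq Q_i$ (the standard correspondence of ideals), or alternatively because $Q_i$ is maximal and $\sigma^{-1}(\langle f_i\rangle)\neq R_y$. Primality of $f_i$ is not needed there. This is cosmetic and does not affect the validity of the argument.
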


\begin{proof}
For convenience, suppose $f$ is monic. In this case, $c=1$.

Since $f$ is squarefree and $f=f_1 \cdots f_t$ is an irreducible
factorization of $f$ as assumed, $Q=\langle I, h\rangle_{R_y}$ is
a radical ideal and $Q_i=\langle I, h_i\rangle_{R_y}$'s are prime
ideals by lemma \ref{prop}. Furthermore, $Q$ has a minimal prime
decomposition $Q = Q_1 \cap \cdots \cap Q_t$. We also have
$p_{r}=p_{r,1} \cdots p_{r,t}$ by proposition \ref{cps}.

Since $p_{r,i} \in k[\lambda]$ is the characteristic polynomial of
$m_{r,i}$, substituting $\lambda$ in $p_{r, i}$ by the expression
of $r\in R_y$, it follows that $p_{r,i}(r)\in Q_i=\langle I,
h_i\rangle_{R_y}$ by proposition \ref{polymap}. That is, there
exist $a\in I_y$ and $b\in R_y$ such that $p_{r,i}(r)=a+bh_i$.
Applying ${\sigma}$ to both sides of equation, we get
${\sigma}(p_{r,i}(r)) =
{\sigma}(a)+{\sigma}(b){\sigma}(h_i)={\sigma}(b)f_i$, which means
$f_i \mid {\sigma}(p_{r,i}(r))$. This shows that $f_i$ is a
nontrivial common divisor of $f$ and $\sigma(p_{r,i}(r))$ for  $1
\le i \le t$.

By corollary \ref{charpol}, each $p_{r,i}$ must be a power of an
irreducible polynomial in $k[\lambda]$. Notice that $p_{r}=p_{r,1}
\cdots p_{r,t}=q_{1}^{m_1} \cdots q_{s}^{m_s}$, which implies that
for each $j$ there exists at least one $p_{r,i}$ such that
$p_{r,i} \mid q_j$. So $\GCD(f, {\sigma}(q_{j}(r)))\not=1$ for $
1\le j \le s$.

We have already shown that $f_i \mid \sigma(p_{r,1}(r)) \cdots
\sigma(p_{r,t}(r))=\sigma(q_{1}(r))^{m_1} \cdots
\sigma(q_{s}(r))^{m_s}$. Since $f_i$ is irreducible over $K$, then
there exists a $j$ where $1\le j \le s$, such that $f_i \mid
\sigma(q_{j}(r))$. As assumed, $f_1,\cdots, f_t$ are distinct
factors of the squarefree polynomial $f$. It follows that
\begin{equation}\label{eq_1} f=f_1\cdots f_t \mid \prod_{i=1}^s \GCD(f,\sigma(q_i(r))).\end{equation}


For each $i$, $\GCD(f,\sigma(q_{i}(r)))$ is  squarefree  since $f$
itself is squarefree.

Since $q_{i}$ and $q_{j}$ are co-prime in $k[\lambda]$ whenever
$i\not=j$, then there exist $a, b \in k[\lambda]$ such that
$aq_{i}+bq_{j}=1$. Substituting $\lambda$ by the expression of
$r$, the equality still holds for $a(r)q_{i}(r)+b(r)q_{j}(r)=1$.
Applying ${\sigma}$ to both sides of equation, we have
${\sigma}(a(r)){\sigma}(q_{i}(r))+{\sigma}(b(r)){\sigma}(q_{j}(r))=1$,
which implies ${\sigma}(q_{i}(r))$ and ${\sigma}(q_{j}(r))$ are
co-prime in $\K[y]$ and hence $\GCD (f,\sigma(q_{i}(r)))$ and
$\GCD (f,\sigma(q_{j}(r)))$ are co-prime as well. Therefore,
$\prod_{i=1}^s \GCD(f,\sigma(q_{i}(r)))$ is squarefree, which
indicates
\begin{equation}\label{eq_2}
\prod_{i=1}^s \GCD(f,\sigma(q_{i}(r))) \mid f,
\end{equation} since $\GCD(f,\sigma(q_{i}(r)))\mid f$ for $1\le i\le
s$.

From (\ref{eq_1}) and (\ref{eq_2}), we have  $f=\prod_{i=1}^s \GCD
(f,\sigma(q_{i}(r)))$. The first part of theorem is proved.


Particularly, if $m_k=1$ for some $k$, the equation $q_{1}^{m_1}
\cdots q_{s}^{m_s}=p_{r,1} \cdots p_{r,t}$ shows $q_{k}$ divides
only one $p_{r, i}$. Then we have $p_{r, i}=q_{k}$ and $p_{r, i}$
is co-prime with other $p_{r, j}$ whenever $i\not=j$. With a
similar discussion, it is easy to show $\sigma(p_{r, i}(r))$ and
$\sigma(p_{r, j}(r))$ are co-prime in $\K[y]$ whenever $i\not=j$.
Clearly, $\GCD (f,\sigma(p_{r, i}(r)))=\GCD (f,\sigma(q_{k}(r)))$
is a factor of $f$ and we also know $f_i \mid \GCD (f,\sigma(p_{r,
i}(r)))$ as discussed earlier. Therefore, if there exists $f_j$
such that $f_i\not=f_j$ and $f_j\mid \GCD (f,\sigma(p_{r,
i}(r)))$, then $\sigma(p_{r, i})$ and $\sigma(p_{r, j})$ will have
a nontrivial common divisor $f_j$. This contradiction implies
$\GCD (f,\sigma(q_{k}(r)))=f_i$ and hence irreducible over $\K$.
\end{proof}

Then we have two immediate corollaries of the main theorem.

\begin{cor} \label{cor-standard}
If the characteristic polynomial $p_r$ of $m_r$ is squarefree,
suppose $p_r= q_{1} \cdots q_{s}$ is an irreducible factorization
of $p_r$ over $k$, then
$$f=c \prod_{i=1}^s  \GCD (f,\sigma(q_{i}(r)))$$ is an irreducible
factorization of $f$ over $\K$, where $c$ is constant in $\K$.
\end{cor}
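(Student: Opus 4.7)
The plan is to derive this corollary as a direct consequence of Theorem \ref{thm-main}, so essentially no new argument is required beyond invoking that result in the special case where every multiplicity $m_i$ equals one. First I would observe that the hypothesis that $p_r$ is squarefree, together with its irreducible factorization $p_r = q_1 \cdots q_s$, is exactly the statement that in the general decomposition $p_r = q_1^{m_1}\cdots q_s^{m_s}$ appearing in Theorem \ref{thm-main} one has $m_i = 1$ for every $i = 1, \ldots, s$.

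Next, I would apply the first conclusion of the Main Theorem verbatim: each $\GCD(f, \sigma(q_i(r)))$ is nontrivial and the product identity
$$f = c \prod_{i=1}^s \GCD\bigl(f, \sigma(q_i(r))\bigr)$$
holds over $\K$, where $c \in \K$ is the leading coefficient constant (if $f$ is already monic then $c = 1$, matching the proof convention used in Theorem \ref{thm-main}). So the factorization identity itself is immediate.

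Finally, to see that this factorization is actually the irreducible factorization of $f$ over $\K$, I would invoke the "Furthermore" clause of Theorem \ref{thm-main}: whenever $m_i = 1$, the factor $\GCD(f, \sigma(q_i(r)))$ is irreducible over $\K$. Since in our situation $m_i = 1$ for every $i$, every factor on the right-hand side is irreducible, and the displayed equation is therefore the (essentially unique) irreducible factorization of $f$ in $\K[y]$. Because the corollary is purely a restatement of Theorem \ref{thm-main} under the squarefree hypothesis, I do not anticipate any genuine obstacle; the only point worth being careful about is matching the normalization constant $c$ when $f$ is not monic, which is handled exactly as in the proof of the Main Theorem.
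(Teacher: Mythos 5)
Your proof is correct and is exactly the argument the paper intends: the squarefree hypothesis forces $m_i=1$ for every $i$, so the product formula of Theorem \ref{thm-main} combined with its ``Furthermore'' clause yields the irreducible factorization directly. The paper states this as an immediate corollary without a written proof, and your reasoning supplies precisely the expected justification.
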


\begin{cor}
If the characteristic polynomial of $m_r$ is irreducible over $k$,
then $f$ is irreducible over $\K$.
\end{cor}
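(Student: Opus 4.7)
The plan is to deduce this corollary as an immediate specialization of Corollary \ref{cor-standard} (equivalently, of the Main Theorem \ref{thm-main}). Under the hypothesis that $p_r$ is irreducible over $k$, its irreducible factorization is trivial: $s=1$, $q_1 = p_r$, and the multiplicity $m_1$ is $1$. In particular $p_r$ is squarefree, which is exactly the hypothesis required by Corollary \ref{cor-standard}.

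Applying Corollary \ref{cor-standard} therefore yields
\[
f \;=\; c \cdot \GCD(f, \sigma(p_r(r)))
\]
as an irreducible factorization of $f$ over $\K$, consisting of a single (monic) irreducible factor times a constant $c\in \K$. Hence $f$ is itself irreducible over $\K$, which is exactly what we want.

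An alternative, essentially equivalent route is to combine Proposition \ref{cps} with Lemma \ref{prop} directly: from $p_r = p_{r,1}\cdots p_{r,t}$ with each $p_{r,i}$ of positive degree, irreducibility of $p_r$ forces $t=1$; then Lemma \ref{prop} tells us the irreducible factorization $f = f_1 \cdots f_t$ has only one factor, so $f$ is irreducible. I would probably present the proof via Corollary \ref{cor-standard} since the chain of invocations is shortest there.

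There is no real obstacle here; the only thing worth checking is the small observation that an irreducible polynomial of positive degree is automatically squarefree, so the hypothesis "$p_r$ squarefree" in Corollary \ref{cor-standard} is indeed satisfied. Once that is noted, the corollary follows in one line.
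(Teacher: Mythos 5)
Your proof is correct and matches the paper's intent: the paper labels this an ``immediate corollary of the main theorem'' without spelling out the argument, and your invocation of Corollary~\ref{cor-standard} (noting that an irreducible $p_r$ is automatically squarefree with $s=1$, $m_1=1$) is exactly the intended one-line deduction. Your alternative route through Proposition~\ref{cps} and Lemma~\ref{prop} is also valid and, if anything, slightly more self-contained since it makes explicit that each $p_{r,i}$ has positive degree (because each $Q_i$ is proper), so irreducibility of $p_r$ forces $t=1$.
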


Corollary \ref{cor-standard} indicates that if we are lucky enough
to get a squarefree characteristic polynomial $p_r$, then we can
obtain the complete factorization of $f$ directly; otherwise, by
the main theorem \ref{thm-main}, we will get some factors of $f$,
which can be factored in a further step.

The most important contribution of the main theorem is that we are
able to check which factor of $f$ is irreducible by simply
investigating whether $m_i$ is $1$, which ensures the method
provided in this paper is a complete method for factoring
polynomials in $\K[y]$.


\section{Algorithm for Factorization}


In this section, we will present the algorithm for factorization
over algebraic extension field based on the main theorem
\ref{thm-main}. Before doing that, we discuss some algorithmic
details first.

Given a polynomial $f\in \K[y]$, it is usually not squarefree. So
in order to apply the main theorem, we can factor the squarefree
part of $f$ first and deduce a factorization of $f$ afterwards,
which is not very difficult no matter the field $K$ is
characteristic 0 or not. In the new algorithm, the $\GCD$
computation over algebraic extension field is necessary, and many
algorithms have been proposed for this purpose \citep{Hoeij,lange,
maza}.

In case the {\em characteristic} polynomial of $m_r$ is difficult
to compute, we can calculate the {\em minimal} polynomial of $m_r$
instead with the following observation.

\begin{prop}\label{charequalmin}
If the {\em characteristic} polynomial of $m_r$ is squarefree,
then the {\em minimal} polynomial and the {\em characteristic}
polynomial of $m_r$ are identical.
\end{prop}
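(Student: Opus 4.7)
The plan is to combine the two standard facts already available in the paper: by the Cayley--Hamilton theorem the minimal polynomial $\mu_r$ of $m_r$ divides the characteristic polynomial $p_r$, and by Proposition \ref{minchar} the two polynomials share exactly the same irreducible factors in $k[\lambda]$.

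First I would write $p_r = q_1^{e_1} \cdots q_s^{e_s}$ with the $q_i$ distinct monic irreducibles in $k[\lambda]$. By Proposition \ref{minchar}, the minimal polynomial has the form $\mu_r = q_1^{f_1} \cdots q_s^{f_s}$ with $1 \le f_i$ for each $i$. Since $\mu_r \mid p_r$, we also have $f_i \le e_i$.

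Next I would invoke the hypothesis: $p_r$ is squarefree, so every exponent $e_i$ equals $1$. Combined with $1 \le f_i \le e_i = 1$, this forces $f_i = 1$ for all $i$, hence $\mu_r = p_r$ (both being monic). This is the entire argument; there is no real obstacle, as the proposition is essentially a direct corollary of Proposition \ref{minchar} together with Cayley--Hamilton, both of which are cited earlier in the paper.
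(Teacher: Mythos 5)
Your proof is correct and follows essentially the same route as the paper, which simply notes the result is an easy corollary of Proposition \ref{minchar}; you have just spelled out the routine exponent comparison that the paper leaves implicit.
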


\begin{proof}
It is an easy corollary of proposition \ref{minchar}.
\end{proof}

Conversely, if the minimal polynomial has lower degree than its
characteristic polynomial, then the characteristic polynomial is
not squarefree. Many methods can be exploited for computing the
minimal polynomial, such as the famous {\em FGLM} method
\citep{FGLM}.

Now, it is time to present the algorithm for factorization over
algebraic extension field.

\begin{algorithm}[!ht] \caption{\bf --- Factorization}
  \smallskip
  \Input{$f$, a squarefree monic polynomial in $\K[y]$.}\\
  \Output{the factorization of $f$ in $\K[y]$.}
  \medskip

  \noindent
  \lbegin\\
  \SPC $r \lla$ a random polynomial in $k[x_1,\cdots,x_n,
  y]$\\
  \SPC $p_r \lla$ the characteristic polynomial of the
  linear map $m_r$\\
  \SPC factor $p_r$ over $k$ and obtain $p_r=q_1^{m_1}\cdots
  q_s^{m_s}$\\
  \SPC \lfor $i$ \lfrom $1$ \lto $s$ \ldo \\
  \SPC\SPC $f_i \lla \GCD(f,\sigma(q_i(r)))$\\
  \SPC\SPC \lif $m_i= 1$   $\# f_i$ is irreducible\\
  \SPC\SPC\SPC \lthen $g_i\lla f_i$\\
  \SPC\SPC\SPC \lelse $g_i \lla  Factorization(f_i)$\\
  \SPC\SPC \lendif\\
  \SPC \lendfor\\
  \SPC \lreturn $g_1g_2\cdots g_s$\\
  \lend
\end{algorithm}

\begin{remark}
The computation of characteristic polynomial $p_r$ of $m_r$ is an
important step of the above algorithm. According to the method
provided in \citep{CLO}, $p_r$ is easy to compute if the \gr basis
of $Q=\langle I, h\rangle_{R_y}$ is known, where $h$ is a nature
lift of $f$. Fortunately, if $f$ is monic in $\K[y]$, the \gr
basis of $Q$ can be constructed directly, since $\{G, h\}$ is a
\gr basis of $\langle I, h\rangle_{R_y}$ with the elimination
monomial order $y \succ x$, where $G$ is a \gr basis of $I$.
\end{remark}

The correctness of the above algorithm is ensured by the main
theorem \ref{thm-main}. So it remains to discuss the termination.

First, we will show the characteristic polynomial $p_r$ of $m_r$
is squarefree with a fairly high probability for a random chosen
$r\in R_y$. Clearly, if $p_r$ is squarefree, then the algorithm
terminates immediately by corollary \ref{cor-standard}.

\begin{prop}\label{prop-probability}
If the characteristic of $k$ is $0$, then the probability that the
characteristic polynomial $p_r$ of $m_r$ is squarefree for  a
random $r \in R_y$  is $1$.
\end{prop}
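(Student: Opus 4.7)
The plan is to show that the set of $r\in R_y$ for which $p_r$ fails to be squarefree is contained in a proper Zariski-closed subvariety of any finite-dimensional slice of $R_y$. In characteristic zero such a subvariety is a null set under any absolutely continuous distribution, and by a Schwartz--Zippel style argument the probability of sampling in its complement tends to $1$; this is the standard meaning of ``probability $1$'' here. By Proposition~\ref{cps} we have $p_r = p_{r,1}\cdots p_{r,t}$, and by Corollary~\ref{charpol} each $p_{r,i}$ is a power $\tilde q_i^{e_i}$ of a single monic irreducible $\tilde q_i\in k[\lambda]$. Hence $p_r$ is squarefree exactly when (a) every $e_i$ equals $1$, and (b) the $\tilde q_i$ are pairwise distinct.

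For (a), since $f_i$ is irreducible in $\K[y]$, the ring $\A_k(Q_i)=R_y/Q_i\cong \K[y]/(f_i)$ is a finite field extension of $k$, hence separable in characteristic $0$. A standard computation in linear algebra identifies $p_{r,i}$ with $\mu_{r,i}^{[\A_k(Q_i):k([r]_i)]}$, where $\mu_{r,i}$ is the minimal polynomial of $[r]_i$ over $k$, so $e_i=1$ iff $[r]_i$ is a primitive element of $\A_k(Q_i)/k$. By the primitive element theorem together with the finiteness of intermediate fields for separable extensions, the non-primitive elements form the finite union of proper $k$-subfields of $\A_k(Q_i)$, a proper $k$-linear (hence Zariski-closed) subset; pulling back through the surjection $R_y\twoheadrightarrow \A_k(Q_i)$ yields a proper Zariski-closed subset of $R_y$. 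For (b), the Chinese Remainder Theorem applied to the decomposition of Lemma~\ref{prop} gives $R_y/Q\cong\prod_i \A_k(Q_i)$, so the components $[r]_i$ may be prescribed independently; a generic tuple of primitive elements has pairwise distinct minimal polynomials, so the coincidence locus $\{\tilde q_i=\tilde q_j\}$ is likewise a proper Zariski-closed subset.

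Combining the two cases, the bad set is a finite union of proper Zariski-closed subvarieties in any finite-dimensional parameter space for $r$, hence itself a proper Zariski-closed subvariety, from which the claim follows. The main obstacle, largely technical rather than conceptual, is formalizing ``random $r\in R_y$'' over an infinite-dimensional ring; this is handled by restricting $r$ to polynomials of bounded support (for instance, of degree at most the $k$-dimension of $\A_k(Q)$), within which the bad set is algebraic of proper codimension and is therefore avoided with probability $1$ under any reasonable sampling in characteristic $0$.
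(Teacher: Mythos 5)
Your proof is correct, but you take a genuinely different route from the paper. The paper's proof is geometric: it invokes the fact that $V(Q)$ consists of exactly $d=\dim_k\A_k(Q)$ distinct points $z_1,\ldots,z_d$ in $\bar k$, and uses the characterization (Theorem~4.5 of \citep{CLO}) that $p_r$ is squarefree iff $r$ separates these points, i.e.\ $r(z_i)\neq r(z_j)$ for $i\neq j$. The bad locus is then a finite union of the nontrivial hyperplanes $C_{ij}=\{(a_1,\ldots,a_d)\in k^d\mid \sum a_\ell e_\ell(z_i)=\sum a_\ell e_\ell(z_j)\}$, where $e_1,\ldots,e_d$ is the standard monomial basis, so the set of bad coefficient vectors is a proper algebraic subset of $k^d$. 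Your proof is instead algebraic and field-theoretic: you apply the paper's own Proposition~\ref{cps} and Corollary~\ref{charpol} to reduce the statement to (a) each $[r]_i$ being a primitive element of the separable extension $\A_k(Q_i)/k$, handled by the finiteness of intermediate fields, and (b) the resulting irreducibles being pairwise distinct, handled by CRT independence and a finite union of isomorphism graphs. Each approach has something to recommend it: the paper's argument is more self-contained and directly exhibits the bad locus as a union of hyperplanes (making the degree estimates in the subsequent termination proposition immediate), while yours stays entirely within the machinery already developed in the paper and exposes the structural reason (failure of primitivity or conjugacy coincidence) why $p_r$ can fail to be squarefree. Both correctly reduce the issue to a proper Zariski-closed subset of a finite-dimensional coefficient space, and you rightly flag that ``random $r\in R_y$'' must be interpreted via a bounded-degree slice, just as the paper does by parameterizing $r$ through the monomial basis of $\A_k(Q)$. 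One small point that deserves a sentence more of justification in your step (b): the coincidence locus $\{\tilde q_i=\tilde q_j\}$ restricted to pairs of primitive elements is the union of graphs of the finitely many $k$-algebra isomorphisms $\A_k(Q_i)\to\A_k(Q_j)$ (empty if the fields are non-isomorphic), each of which is a proper linear subspace of the product; stating this explicitly closes the argument.
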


\begin{proof} The technique of the proof draws lessons from
\citep{Mon}.

Since $Q=\langle I, h\rangle_{R_y}$ is a zero dimensional radical
ideal, the quotient ring $\A_k(Q)=R_y/Q$ has finite dimension as a
vector space. Let $d=\dim_k(\A_k(Q))$. According to the basic
algebraic geometry, we know the variety $V(Q)$ has $d$ distinct
points, say $z_1,\cdots,z_d$, in an extension field of $k$.

Notice that $p_r\in k[\lambda]$ is squarefree if and only if
$r(z_i) \not= r(z_j)$ whenever $i \not= j $, which is a direct
consequence of theorem 4.5 in \citep{CLO}. Therefore, consider the
set: $$C=\{ r \mid p_r \mbox{ is not squarefree}\}=\{ r \mid
\exists z_i,z_j\in V(Q) \mbox{ with } z_i\not=z_j \mbox{ such that
} r(z_i)=r(z_j)\}.$$ Since $V(Q)$ has finite points, then it only
suffices to show the set
$$C_{ij}=\{ r \mid r(z_i)=r(z_j) \mbox{ and } z_i\not= z_j\}$$ is an algebraic set.

Let $\{ e_1,\cdots,e_d \}$ be the standard monomial basis of
$\A_k(Q)$. Thus, $[r]=a_1e_1+\cdots+a_de_d$, where $a_i\in k$ for
$i=1 \cdots d$. So $C_{ij}$ also has an isomorphic form:
$${\tilde{C}_{ij}}=\{(a_1,\cdots,a_d)\in k^d \mid a_1e_1(z_i)+\cdots+a_de_d(z_i)=a_1e_1(z_j)+\cdots+a_de_d(z_j) \mbox{ and } z_i\not= z_j\}.$$
According to the section 2.4 of \citep{CLO}, $z_i$ is uniquely
determined by the vector $(e_1(z_i),\cdots,e_d(z_i))$. Therefore,
$z_i\not= z_j$ implies
$$(e_1(z_i),\cdots,e_d(z_i))\not=(e_1(z_j),\cdots,e_d(z_j))$$ and
hence $(e_1(z_i)-e_1(z_j),\cdots,$ $e_d(z_i)-e_d(z_j))$ is a
nonzero vector.

Thus $ \tilde{C}_{ij} $ is a proper algebraic set in $k^d$.
Consequently, $C$ is isomorphic to a proper algebraic set of
$k^d$. Since the characteristic of $k$ is $0$, the probability
that a random $r \in R_y$  belongs to the set $C$ is $0$, which
completes the proof.
\end{proof}

In order to simplify the computation, we usually prefer $r$ in a
linear form. The following corollary shows the characteristic
polynomial $p_r$ of $m_r$ is also squarefree with a high
probability for a randomly chosen {\em linear} $r$.

\begin{cor}
If the characteristic of $k$ is $0$, then the probability that the
characteristic polynomial $p_r$ of $m_r$ is squarefree for  a
random {\em linear} $r \in R_y$ is also $1$.
\end{cor}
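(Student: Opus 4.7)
The plan is to mimic the proof of Proposition \ref{prop-probability}, but restrict the ambient parameter space from all polynomials in $R_y$ to the finite-dimensional subspace $L \subset R_y$ of linear polynomials, which has a natural identification with $k^{n+2}$ via the coefficient map $r = c_0 + c_1 x_1 + \cdots + c_n x_n + c_{n+1} y \leftrightarrow (c_0,c_1,\ldots,c_{n+1})$.

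First, I would recall from Proposition \ref{prop-probability} that $p_r$ fails to be squarefree exactly when there exist two distinct points $z_i \neq z_j$ in $V(Q)$ with $r(z_i) = r(z_j)$. So I would fix a pair $i \neq j$ and analyse the set
\[
\tilde{C}_{ij}^{\mathrm{lin}} = \{(c_0,\ldots,c_{n+1}) \in k^{n+2} \mid r(z_i) = r(z_j)\}.
\]
Writing $z_i = ((z_i)_1,\ldots,(z_i)_{n+1})$ for the coordinates of $z_i$ in $\bar{k}^{n+1}$, the defining condition becomes
\[
\sum_{\ell=1}^{n+1} c_\ell \bigl((z_i)_\ell - (z_j)_\ell\bigr) = 0,
\]
which is a single linear equation in the $c_\ell$'s. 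Because $z_i \neq z_j$, at least one coordinate difference $(z_i)_\ell - (z_j)_\ell$ is nonzero, so the equation is nontrivial and $\tilde{C}_{ij}^{\mathrm{lin}}$ is a proper linear subspace (a hyperplane) of $k^{n+2}$.

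Next, since $V(Q)$ has only finitely many points $z_1, \ldots, z_d$, the bad locus restricted to linear polynomials is the finite union
\[
\tilde{C}^{\mathrm{lin}} = \bigcup_{i \neq j} \tilde{C}_{ij}^{\mathrm{lin}},
\]
which is again a proper algebraic subset of $k^{n+2}$. As the characteristic of $k$ is $0$, a random element of $k^{n+2}$ lies in this proper algebraic subset with probability $0$, so a random linear $r \in R_y$ yields a squarefree $p_r$ with probability $1$.

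The only subtle point — and really the only thing to be careful about — is verifying that the coordinate differences $(z_i)_\ell - (z_j)_\ell$ cannot simultaneously vanish for $\ell = 1,\ldots,n+1$, which is immediate from $z_i \neq z_j$ as points of $\bar{k}^{n+1}$. There is no reliance on a specific monomial basis as in Proposition \ref{prop-probability}: using the coordinate functions $x_1,\ldots,x_n,y$ directly, distinctness of the points automatically yields a nonzero linear form, so the reduction from general $r$ to linear $r$ costs nothing.
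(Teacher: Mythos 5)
Your proof is correct and follows essentially the same approach as the paper's, which simply notes that $C_{ij}$ is isomorphic to a proper algebraic subset of $k^{n+1}$ (parametrizing $r=by+a_1x_1+\cdots+a_nx_n$ without a constant term) and concludes as in Proposition~\ref{prop-probability}. You fill in the detail the paper leaves implicit — that $z_i\neq z_j$ forces some coordinate difference $(z_i)_\ell-(z_j)_\ell$ to be nonzero, making the linear equation nontrivial — and correctly observe that this step is even more direct here than in the general case, since the coordinate functions $x_1,\ldots,x_n,y$ separate points of $V(Q)$ automatically, with no appeal to the monomial-basis argument from [CLO] needed.
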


\begin{proof}
The proof is in the same fashion as proposition
\ref{prop-probability}. The mere difference is that $r$ has a {\em
linear} expression $r=by+a_1x_1+\cdots+a_nx_n$. Then the set
$C_{ij}=\{ r \mid r(z_i)=r(z_j) \mbox{ and } z_i \not= z_j\}$ is
isomorphic to a proper algebraic set of $k^{n+1}$, which completes
the proof.
\end{proof}

There are some tricks for choosing a linear $r$ so as to speed up
the algorithm. For example, the variable $y$ needs to appear in
the expression of $r$ and we usually set the coefficient of $y$ as
1; also, if the variable $x_i$ happens to be a leading power
product of some polynomial in the \gr basis of $I$, then this
variable $x_i$ is not needed in $r$, as it can be reduced
afterwards.

Although the probability that the characteristic polynomial $p_r$
of $m_r$ is squarefree for a random (linear) $r \in R_y$ is $1$,
it is not sufficient to show the algorithm terminates all the
time. However, the following proposition indicates that if we
select $r$ in a special fashion, the algorithm terminates in
finite steps.

\begin{prop}
If the characteristic of $k$ is $0$, then we can find an $r\in
R_y$ such that $p_r$ is squarefree in finite steps.
\end{prop}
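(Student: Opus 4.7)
The plan is to convert the probability-one statement of Proposition \ref{prop-probability} into a constructive finite search by restricting attention to a well-chosen one-parameter family of linear forms and then bounding the number of ``bad'' parameters by a Vandermonde-type argument.

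First, I would reuse the setup from the proof of Proposition \ref{prop-probability}. The ideal $Q = \langle I, h\rangle_{R_y}$ is zero dimensional and radical, so if $d = \dim_k \A_k(Q)$ then $V(Q)$ consists of exactly $d$ distinct points $z_1, \ldots, z_d$ in $\overline{k}^{\,n+1}$. Moreover $p_r$ is squarefree if and only if the values $r(z_1), \ldots, r(z_d)$ are pairwise distinct; this is the same characterization used in the previous proposition and is a direct application of Theorem 4.5 in \citep{CLO}.

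Next, I would introduce the explicit family of linear polynomials
$$r_t \;:=\; y + t\,x_1 + t^2\,x_2 + \cdots + t^n\,x_n \;\in\; R_y, \qquad t \in \mathbb{Z} \subset k.$$
For a fixed pair $i \neq j$, write $z_i = (z_i^{(1)}, \ldots, z_i^{(n)}, z_i^{(y)})$ and similarly for $z_j$. Then
$$r_t(z_i) - r_t(z_j) \;=\; \bigl(z_i^{(y)} - z_j^{(y)}\bigr) + \sum_{\ell=1}^{n} t^\ell \bigl(z_i^{(\ell)} - z_j^{(\ell)}\bigr)$$
is a polynomial in $t$ of degree at most $n$ with coefficients in $\overline{k}$. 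Since $z_i \neq z_j$, at least one of the coefficient differences is nonzero, so this polynomial is not identically zero and therefore has at most $n$ roots in $\overline{k}$. Summing over the $\binom{d}{2}$ pairs, there are at most $n\binom{d}{2}$ values of $t \in \mathbb{Z}$ for which $r_t$ fails to separate $V(Q)$.

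The finite procedure is then to try $t = 0, 1, 2, \ldots$ in succession: for each $t$, compute $p_{r_t}$ and test squarefreeness by the standard criterion $\GCD(p_{r_t}, p_{r_t}') = 1$; return $r_{t^*}$ for the first successful $t^*$. By the bound above, a valid $t^*$ is guaranteed to occur within the first $n\binom{d}{2} + 1$ attempts, so the procedure halts in finitely many steps. The main obstacle in this plan is the non-vanishing of $r_t(z_i) - r_t(z_j)$ as a polynomial in $t$; this is the essential point and is exactly where the geometric Vandermonde structure of the exponent sequence $1, t, t^2, \ldots, t^n$ together with characteristic zero is used, since in positive characteristic a bounded range of integers could in principle all be roots of these difference polynomials.
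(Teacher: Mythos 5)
Your proof is correct, but it takes a genuinely different route from the paper's. The paper parametrizes $r$ by its image in $\A_k(Q)$, that is, by a vector $(a_1,\ldots,a_d)\in k^d$ relative to the standard monomial basis $\{e_1,\ldots,e_d\}$; it observes that the bad locus $C$ is cut out by a nonzero polynomial $F$ of total degree at most $d(d-1)/2$ (the product of the linear forms defining the $C_{ij}$), and then searches a finite grid $D=\{(a_1,\ldots,a_d) : 0\le a_i\le \deg_{a_i}F\}$, relying on the fact that a nonzero polynomial over a characteristic-zero field cannot vanish identically on such a grid. You instead restrict to a single one-parameter family $r_t = y + t x_1 + \cdots + t^n x_n$ with geometric-progression coefficients, reduce the problem to $d(d-1)/2$ nonzero univariate polynomials $r_t(z_i)-r_t(z_j)\in\overline{k}[t]$ of degree at most $n$, and conclude that at most $n\,d(d-1)/2$ integer values of $t$ fail. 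Both arguments are sound. Yours is somewhat more elementary: you need only that a nonzero polynomial of degree at most $n$ has at most $n$ roots in $\overline{k}$, whereas the paper implicitly invokes a multivariate nonvanishing lemma on a box; and you get a clean, explicit search of at most $n\,d(d-1)/2+1$ trials along a single axis. The pair-type you must watch, where $z_i$ and $z_j$ differ only in the $y$-coordinate, is handled automatically since there $r_t(z_i)-r_t(z_j)$ is a nonzero constant with no roots. The paper's route is the more direct constructivization of the Zariski-density argument already set up in Proposition \ref{prop-probability} and searches over essentially arbitrary $r$ rather than a prescribed subfamily; both approaches, as you correctly note, rely on characteristic $0$ to guarantee that the chosen search points remain pairwise distinct in $k$.
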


\begin{proof}
In fact, according to the proof of proposition
\ref{prop-probability}, the set $C$ is the union of all $C_{ij}$
for $i\not=j$, where $C_{ij}$ is isomorphic to the set
$\{(a_1,\cdots,a_d)\in k^d \mid
a_1(e_1(z_i)-e_1(z_j))+\cdots+a_d(e_d(z_i)-e_d(z_j))=0 \mbox{ and
} z_i\not= z_j\}$. Thus, $C$ is isomorphic to the solution set of
a polynomial equation $F(a_1,\cdots,a_d)=0$, while the total
degree of $F$ is at most $d(d-1)/2$. Let
$d_i=\deg_{a_i}F(a_1,\cdots,a_d)$ for $i=1,\cdots,d$ and
$D=\{(a_1, \cdots, a_d)\mid a_i=0, \cdots, d_i \mbox{ for } 1\le
i\le d\}$. Since $F\not=0$, $F$ cannot vanish on all the points of
$D$. So there exist $(a'_1, \cdots, a'_d)\in D$ such that $F(a'_1,
\cdots, a'_d)\not=0$. Then $r=a'_1e_1+\cdots+a'_de_d$ is the $r$
such that $p_r$ is squarefree. As the cardinality of $D$ is
finite, this $r$ can be constructed within finite steps.
\end{proof}

Therefore, in each recursive call of {\em Factorization$(f_i)$},
if we choose a different $r$ in the above fashion, the algorithm
must terminate in finite steps.

At last, let say something about the complexity of the new
algorithm. Given a \gr basis $G$ of $I$ and the set $\{G, h\}$ is
a \gr basis of $Q=\langle I, h\rangle_{R_y}$ as discussed earlier,
so computing a basis for $\A_k(Q)$ has complexity $O(n)$.
Computing the matrix of $m_r$ requires $O(n^3)$ field operations
in the worst case. Computing the characteristic polynomial $p_r$
requires $O(n^3)$ field operations. Factoring the univariate
polynomial $p_r$ has been studied by many researchers, and more
details can be found in \citep{coh,LLL}. As a result, by using
this new algorithm, the problem of factoring polynomials over
algebraic extension field can be transformed to the factorization
of univariate polynomials over the ground field in polynomial
time.


\section{A Complete Example}


In this section, we illustrate the new algorithm through a
complete example.

\begin{example}
Given a maximal ideal $I=\langle x_1^2+1, x_2^2+x_1\rangle \subset
\Q [x_1,x_2]$, where $\Q$ is the rational field.  Then the
extension field is $\K=\Q[x_1,x_2]/I$. Notice that
$\{x_1^2+1,x_2^2+x_1\}$ is already a \gr basis of $I$ for the
lexicographic order with $x_2\succ x_1$.

We are going to factor the polynomial
$$f=y^3+(\alpha_1\alpha_2-2\alpha_1-\alpha_2)y^2+(\alpha_1\alpha_2+2\alpha_2-2)y+\alpha_1-\alpha_1\alpha_2\in
\K[y],$$ where $\alpha_i=[x_i] \in \K$.
\end{example}

Since $f$ is squarefree and monic in $\K[y]$,
$$h=y^3+(x_1x_2-2x_1-x_2)y^2+(x_1x_2+2x_2-2)y+x_1-x_1x_2\in R_y$$
is a natural lift of $f$. Thus, $\{x_1^2+1, x_2^2+x_1, h\}$ is a
\gr basis of the ideal $Q=\langle I, h\rangle_{\Q[x_1, x_2, y]}$
for the lexicographic order with $y \succ x_2 \succ x_1$.

According to the new algorithm, we need to choose a random
polynomial $r\in R_y=\Q[x_1,x_2,y]$ first. Here $r=x_1+2x_2+y$ is
selected. Let $\A_k(Q)=\Q[x_1,x_2,y]/Q$, which is obviously a
vector space over $\Q$ with a monomial basis
$$B=[1, x_2, x_1, x_1x_2, y, x_2y, x_1y, x_1x_2y, y^2, x_2y^2,
x_1y^2, x_1x_2y^2]^T.$$

Next, compute the matrix $M$ of the linear map $m_r$ w.r.t. $B$.
Then
$$m_r(B)=MB,$$ where $M$ is  a $12 \times 12$ matrix
$$M=\left( \begin{array}{cccccccccccc}
               0& 2& 1& 0& 1& 0& 0& 0& 0& 0& 0& 0\\
               0& 0& -2& 1& 0& 1& 0& 0& 0& 0& 0& 0\\
               -1& 0& 0& 2& 0& 0& 1& 0& 0& 0& 0& 0\\
               2& -1& 0& 0& 0& 0& 0& 1& 0& 0& 0& 0\\
               0& 0& 0& 0& 0& 2& 1& 0& 1& 0& 0& 0\\
               0& 0& 0& 0& 0& 0& -2& 1& 0& 1& 0& 0\\
               0& 0& 0& 0& -1& 0& 0& 2& 0& 0& 1& 0\\
               0& 0& 0& 0& 2& -1& 0& 0& 0& 0& 0& 1\\
               0& 0& -1& 1& 2& -2& 0& -1& 0& 3& 3& -1\\
               1& 0& 0& -1& -1& 2& 2& 0& -1& 0& -3& 3\\
               1& -1& 0& 0& 0& 1& 2& -2& -3& 1& 0& 3\\
               0& 1& 1& 0& -2& 0& -1& 2& 3& -3& -1& 0\\
      \end{array}\right) $$
The characteristic polynomial of this matrix is
$$p_r=\lambda^{12}+26\lambda^{10}-116\lambda^9+371\lambda^8-2064\lambda^7+6802\lambda^6-17916\lambda^5
+49922\lambda^4$$
$$-109088\lambda^3+155984\lambda^2-134592\lambda+55872 {\hspace*{100pt}}$$
$$=(\lambda^4+10\lambda^2-12\lambda+18)(\lambda^4+8\lambda^2-72\lambda+97)(\lambda^4+8\lambda^2-32\lambda+32).$$

The next step is to substitute $\lambda$ by the expression of $r$
in each factor of $p_r$. For instance,
$q_{1}=\lambda^4+10\lambda^2-12\lambda+18$ becomes
$$q_{1}(r)=(x_1+2x_2+y)^4+10(x_1+2x_2+y)^2-12(x_1+2x_2+y)+18.$$
And
$$\sigma(q_{1}(r))=(\alpha_1+2\alpha_2+y)^4+10(\alpha_1+2\alpha_2+y)^2-12(\alpha_1+2\alpha_2+y)+18\in \K[y].$$

In the following, we compute the $\GCD$ of $f$ and
$\sigma(q_{1}(r))$. Finally obtain
$$\GCD(f, \sigma(p_{r,1}(r)))=y+\alpha_1\alpha_2.$$
Since $m_1=1$, $y+\alpha_1\alpha_2$ is an irreducible factor of
$f$ by theorem \ref{thm-main}. Similarly, since $m_2=m_3=1$, the
other irreducible factors of $f$ can be obtain from
$q_{2}=\lambda^4+8\lambda^2-72\lambda+97$ and
$q_{3}=\lambda^4+8\lambda^2-32\lambda+32$:
$$\GCD(f, \sigma(q_{2}(r)))=y-\alpha_1-\alpha_2, \mbox{ and }
\GCD(f, \sigma(q_{3}(r)))=y-\alpha_1.$$ As a result, we get a
complete factorization of $f\in \K[y]$:
$$f=(y+\alpha_1\alpha_2)(y-\alpha_1-\alpha_2)(y-\alpha_1).$$

In the above procedure, $p_r$ is squarefree, so we obtain a
complete factorization of $f$ directly. However, what if $p_r$ is
not squarefree?

For example, if $r=-\frac{3}{2}x_1-\frac{1}{2}x_2+y$ is selected
at the beginning, then we repeat the above steps.

The monomial basis $B$ does not change, but the matrix varies and
the characteristic polynomial becomes
$$p_r=\lambda^{12}+\frac{7}{2}\lambda^{10}-\frac{7}{2}\lambda^9
+\frac{113}{8}\lambda^8-\frac{3}{2}\lambda^7+\frac{33}{4}\lambda^6
+\frac{41}{4}\lambda^5+\frac{273}{64}\lambda^4+\frac{67}{16}\lambda^3
+\frac{467}{128}\lambda^2+\frac{169}{128}\lambda+\frac{89}{512}$$
$$=(\lambda^4+\frac{5}{2}\lambda^2-\frac{9}{2}\lambda+\frac{89}{8})
(\lambda^4+\frac{1}{2}\lambda^2+\frac{1}{2}\lambda+\frac{1}{8})^2.$$

Let
$q_1=\lambda^4+\frac{5}{2}\lambda^2-\frac{9}{2}\lambda+\frac{89}{8}$
and
$q_2=\lambda^4+\frac{1}{2}\lambda^2+\frac{1}{2}\lambda+\frac{1}{8}$.
Since $m_1=1$, we can get an irreducible factor of $f$ by theorem
\ref{thm-main}:
$$\GCD(f, \sigma(q_{1}(r)))=y+\alpha_1\alpha_2.$$ While the other
factor $q_2$ only leads to a reducible factor of $f$:
$$\GCD(f, \sigma(p_{r,2}(r)))=y^2-(2\alpha_1+\alpha_2)y+\alpha_1\alpha_2-1,$$
which needs to be factored further.

Let $f'=y^2-(2\alpha_1+\alpha_2)y+\alpha_1\alpha_2-1$ and
$h'=y^2-(2x_1+x_2)y+x_1x_2-1\in R_y$ is a natural lift of $f'$.
Next $r'=-2x_1-2x_2+y$ is chosen. And the monomial basis of
$\Q[x_1, x_2, y]/\langle I, h'\rangle_{\Q[x_1, x_2, y]}$ is
$$B'=[1, x_2, x_1, x_1x_2, y, yx_2, yx_1, yx_1x_2]^T.$$ Notice the
length of $B'$ is $8$, which is smaller than the previous one.
Thus an $8 \times 8$ matrix is constructed and the characteristic
polynomial is
$$p_{r'}=\lambda^8+4\lambda^6+20\lambda^5+23\lambda^4+40\lambda^3+102\lambda^2+100\lambda+34$$
$$=(\lambda^4+2\lambda^2+16\lambda+17)(\lambda^4+2\lambda^2+4\lambda+2)=q'_{1}q'_{2}.$$

Since $m'_1=m'_2=1$, then we obtain two irreducible factors of
$f'$:
$$\GCD(f', \sigma(q'_{1}(r')))=y-\alpha_1, \mbox{ and } \GCD(f', \sigma(q'_{2}(r')))=y-\alpha_1-\alpha_2.$$

Combined with the factor we got earlier, $f$ has a complete
factorization in $\K[y]$:
$$f=(y+\alpha_1\alpha_2)(y-\alpha_1)(y-\alpha_1-\alpha_2).$$

The new algorithm can also perform very well when the ground field
$k$ is a finite field. However, if we consider the factorization
when the ground field is a finite field, according the proof of
proposition \ref{prop-probability}, we will have a lower
probability to find an $r$ such that $p_r$ is squarefree,
especially when the cardinality of $k$ is small.

\ignore{
\begin{example}
Let $k=\F_{31}$, $I=\langle x_1^2+5, x_2^3+x_1\rangle \subset
\F_{31}[x_1,x_2]$ and $\K=\F_{31}[x_1,x_2]/I$. The polynomial to
be factored is
$$f=y^3+(\alpha_1+2\alpha_2-\alpha_1\alpha_2)y^2+(5\alpha_2+2\alpha_1\alpha_2-2\alpha_2^2\alpha_1)y+10\alpha_2^2\in \K[y],$$
where $\alpha_i=[x_i] \in \K$ for $i=1,2$.
\end{example}

Let $h$ be the natural lift of $f$, $Q=\langle I,
h\rangle_{\F_{31}[x_1,x_2, y]}$ and $\A_k(Q)= R_y/Q$. Next
$r=-x_1+x_2+y$ is chosen, while the characteristic polynomial of
$m_r$ with respect to the standard monomial basis of $\A_k(Q)$ is
$$p_r=\lambda^{18}+28\lambda^{16}+20\lambda^{15}+9\lambda^{14}+22\lambda^{13}
+19\lambda^{12}+24\lambda^{11}+5\lambda^{10}+11\lambda^9+22\lambda^8+6\lambda^7
+11\lambda^6+15\lambda^5$$
$$+23\lambda^3+24\lambda^2+20\lambda+28 {\hspace*{260pt}}$$
$$=(\lambda^6+15\lambda^4+14\lambda^2-6)
(\lambda^6+15\lambda^4-11\lambda^3-4\lambda^2+10\lambda+14)
(\lambda^6-2\lambda^4-11\lambda^2+10).$$

Since $p_r$ is squarefree, then we can immediately obtain a
complete factorization of $f$ by corollary \ref{cor-standard}:
$$f=(y+2\alpha_2)(y-\alpha_1\alpha_2)(y+\alpha_1).$$}


\section{Timings}


We have implemented the new algorithm both for the case $k=\Q$ and
for finite fields in {\em Magma}. Since Wang's algorithm can only
work for fields of characteristic 0. In order to be fair, the
examples are randomly generated over the ground field $k=\Q$.

We tested the examples in appendix both for {\em cfactor} which is
an implementation of Wang's algorithm and for {\em efactor} which
is an implementation of the new algorithm. The timings in the
following table are obtained from a computer (Windows XP, CPU
Core2 Duo 2.66GHz, Memory 2GB).

We should mention that {\em cfactor} is implemented in {\em Maple
7}, since {\em cfactor} only can work correctly for {\em Maple 7},
while {\em efactor} is implemented in {\em Magma}. For the input
of the new algorithm, the maximal ideal can be expressed by its
\gr basis for any admissible order, generally for a total degree
order. And for the input of Wang's algorithm, the maximal ideal
has to be its irreducible ascending set, which is equivalent to a
lexicographic \gr basis. Notice that a \gr basis with
lexicographic order usually has larger coefficients than that with
a total degree order.

{\small
\begin{table}[!ht]\centering
\begin{tabular}{|c|c|c|c|c|c|c|c|}\hline
   & $R_y $ & $\dim_k R_y/\langle I^{(i)}, h^{(i)}\rangle_{R_y}$ & \em{cfactor(sec.)} & \em{efactor(sec.)}\\ \hline

$f^{(1)}$ & $\Q[x_1,x_2,y]$ &16 & 0.032 & 0.000\\ \hline

$f^{(2)}$ & $\Q[x_1,x_2,x_3,y]$&28 & 0.110 & 0.031\\ \hline

$f^{(3)}$ & $\Q[x_1,x_2,x_3,x_4,y]$&48 & 12.171 & 0.734\\ \hline

$f^{(4)}$  & $\Q[x_1,x_2,x_3,x_4,y]$ &32& 9.109 & 0.328\\ \hline

$f^{(5)}$  & $\Q[x_1,x_2,x_3,x_4,y]$ & 64& 245.531 & 4.313\\
\hline

$f^{(6)}$  & $\Q[x_1,x_2,x_3,x_4,x_5,y]$ & 32& 44.359 & 1.297\\
\hline

$f^{(7)}$  & $\Q[x_1,x_2,x_3,x_4,x_5,y]$ & 48&91.500 & 9.719\\
\hline

$f^{(8)}$  & $\Q[x_1,x_2,x_3,x_4,x_5,y]$ & 48& 377.327 & 11.469\\
\hline

$f^{(9)}$  & $\Q[x_1,x_2,x_3,x_4,x_5,y]$ & 80 &2011.375 & 63.578\\
\hline

$f^{(10)}$  & $\Q[x_1,x_2,x_3,x_4,x_5,x_6,y]$ & 64& $>2h$ & 96.344\\
\hline
\end{tabular}
\caption{Compared with Wang's Algorithm.} \label{tab-1}
\end{table}}

In the third column of the above table, $h^{(i)}$ is a lift of
$f^{(i)}$. From this table, we can see that the new algorithm is
much more efficient than Wang's, especially for complicated
examples.

By analyzing Wang's algorithm and the new algorithm, we think
there are three main reasons that make the new algorithm more
efficient than Wang's.  First, in Wang's algorithm, the variable
$y$ in $f$, which is to be factored, needs to be replaced by a
linear combination of a new variable $y'$ and the $x_i$'s. This
leads to the expansions of the coefficients as well as the terms
of $f$ when the degree of $f$ in $y$ is big. Second, the modulo
map by a \gr basis, which sends a polynomial into its remainder,
is a ring homomorphism, which speeds up the new algorithm. But in
Wang's algorithm, the psudo-remainder map does not hold this
property. Last and the most important, the complexity of computing
the characteristic polynomial of $m_r$ is polynomial time for any
given $r$. However, the complexity of computing the characteristic
set in Wang's algorithm is exponential. Besides, any new technique
for calculating the characteristic polynomial will speed up the
new algorithm as well.


\section{Conclusions and Future Works}


In this paper, we present a new method for factoring polynomials
over an algebraic extension field and this algorithm performs
pretty good for characteristic 0 systems as well as finite field
systems. Compared with Monico's primary decomposition method, the
new algorithm is complete and the irreducible factors can be
verified without extra computations. The new algorithm surely
terminates within finite steps if the linear map in each recursive
call of the algorithm is selected in a special fashion. And in
most cases, the proposed algorithm terminates in few loops, as the
characteristic polynomial of a generic linear map is squarefree
with probability $1$. Moreover, the total complexity of this new
algorithm can be controlled in a reasonable degree.

However, when the characteristic of ground field is 0, the
expansion of coefficients is unavoidable. The situation is better
in finite field. Therefore, a natural idea emerges. That is we can
factor the polynomials in finite field first, and lift the
factorization to characteristic 0 afterwards. We also notice that
Gao gives an efficient algorithm for computing the primary
decomposition over finite fields \citep{gaoshuhong}, which may
help to improve the new algorithm in finite field and hence
benefits for our future work.


\section{Acknowledgements}


We would like to thank Professor D. Lazard and Professor V.P.
Gerdt for their valuable  suggestions during their visits in KLMM.

\appendix

\section{Examples in Timings}

\begin{itemize}
   \item[1.]
   $f^{(1)}=(y+\alpha_1)(y-2\alpha_2)(y^2+\alpha_1+\alpha_2)$,\\
   $I^{(1)}=(x_1+x_2^2,1+x_1^2-x_2x_1)\subset \Q[x_1,x_2]$.

   \item[2.]
   $f^{(2)}=(y+\alpha_1\alpha_3+\alpha_2+\alpha_1)(y-2\alpha_2^2+\alpha_3^2+1)(y^2+\alpha_1\alpha_2+\alpha_3)$,\\
   $I^{(2)}=(x_1^2-x_2x_1+x_3x_1-x_1-x_2, x_1^2-x_3x_1+x_1-x_2^2-x_3x_2-x_2+x_3^2, -1+x_1^2+x_3x_1+x_1-x_2^2-x_2+x_3^2-x_3)\subset
   \Q[x_1,x_2,x_3]$.

   \item[3.]
   $f^{(3)}=(y+\alpha_1)(y-2\alpha_4)(y+\alpha_2+\alpha_3)$,\\
   $I^{(3)}=(x_1^2+x_3x_1-x_1x_4+x_2^2-x_2+x_3x_4-x_4^2-x_4, x_1^2+x_2x_1-x_1x_4+x_2^2+x_3x_2+x_2+x_4^2-x_4, 1+x_2x_1-x_3x_1+x_1x_4+x_1+x_2^2-x_3x_2+x_2x_4-x_2+x_3^2+x_3x_4-x_4^2, x_1^2+x_2x_1+x_3x_1+x_1x_4-x_2^2+x_3x_2-x_2+x_3^2-x_3x_4-x_4^2+x_4)\subset
   \Q[x_1,x_2,x_3,x_4]$.

   \item[4.]
   $f^{(4)}=(y-2\alpha_4^2+\alpha_3\alpha_1+\alpha_2+1)(y+\alpha_2^2+\alpha_3\alpha_4+\alpha_1\alpha_3+2)$,\\
   $I^{(4)}=(-1-x_1^2+x_3x_1+x_2^2-x_3x_2+x_3^2-x_3x_4+x_4^2+x_4, 1+x_2x_1+x_3x_1+x_1-x_3x_2-x_2x_4+x_2-x_3^2-x_3x_4-x_3, 1+x_3x_1+x_1x_4+x_1+x_2^2+x_2x_4-x_2-x_3-x_4^2, x_1^2+x_2x_1+x_3x_1+x_1x_4-x_1-x_2x_4-x_3^2+x_3x_4-x_3+x_4^2+x_4)\subset
   \Q[x_1,x_2,x_3,x_4]$.

   \item[5.]
   $f^{(5)}=(y^2+(\alpha_1+\alpha_4\alpha_2)y+\alpha_3\alpha_4+\alpha_2)(y^2+(\alpha_1\alpha_3-\alpha_4)y+\alpha_3+\alpha_2\alpha_4\alpha_1)$,\\
   $I^{(5)}=(-1+2x_1^2-x_2x_1+2x_3x_1-x_1x_4+x_2^2+x_2x_3+2x_2x_4-2x_3^2+2x_3x_4-x_3-x_4, x_1^2-2x_1x_4-x_1+2x_2^2+x_2x_3+2x_2x_4-x_2-x_3x_4+x_3-2x_4^2+2x_4, 2-2x_1^2+2x_2x_1+x_3x_1+2x_1+2x_2x_3+x_2x_4-x_3^2-2x_3-2x_4^2, 2x_1^2-x_2x_1-x_3x_1-x_1x_4-x_2^2+x_2x_3-x_2x_4-2x_2+2x_3^2-2x_3x_4+x_3+x_4^2+2x_4)\subset
   \Q[x_1,x_2,x_3,x_4]$.

   \item[6.]
   $f^{(6)}=(y+\alpha_1+\alpha_3\alpha_4+\alpha_2\alpha_5)(y+\alpha_2\alpha_5+2-\alpha_3+\alpha_4)$,\\
   $I^{(6)}=(2-x_1^2+x_1x_2-2x_3x_1+2x_4x_1-2x_1+2x_2^2+x_2x_3+2x_2x_5+2x_3^2+x_3x_4-x_3-2x_4x_5-2x_4+2x_5, 1-x_1^2-x_1x_2-x_3x_1-x_1x_5+x_1-x_2^2-2x_2x_3+2x_2x_4-2x_2x_5+x_2+x_3^2+2x_3x_4-x_3-2x_4^2-x_4x_5+x_4+2x_5^2, 1-2x_1^2-2x_1x_2-2x_3x_1-x_4x_1-x_1x_5+x_2^2+2x_2x_3-2x_2x_4+x_2x_5-2x_2-x_3^2-2x_3x_4+2x_3x_5-2x_3-2x_4^2+x_4x_5+2x_5^2+x_5, x_1^2-x_1x_2-x_3x_1-x_4x_1+2x_1+2x_2^2+x_2x_3+2x_2+2x_3x_5+x_4^2-x_4+2x_5^2-x_5, x_2-2x_4+x_5-1)\subset
   \Q[x_1,x_2,x_3,x_4,x_5]$.

   \item[7.]
   $f^{(7)}=(y+\alpha_1+\alpha_3\alpha_4)(y-\alpha_2\alpha_5)(y-\alpha_3+\alpha_4)$,\\
   $I^{(7)}=(1-2x_3x_1+x_4x_1+2x_1x_5+x_1-2x_2^2-x_2x_3+2x_2x_4+2x_2x_5-2x_2-x_3^2-2x_3x_4+2x_3x_5+x_3+x_4^2+2x_4x_5+x_4-x_5^2-2x_5, 1-2x_1^2+2x_1x_2-2x_3x_1-2x_4x_1+2x_1x_5-2x_1+x_2^2+2x_2x_3+x_2x_4+2x_2x_5-2x_2+2x_3^2-x_3x_4+2x_3x_5+2x_3+2x_4^2-x_4x_5-2x_5^2+x_5, -x_1^2-x_1x_2-x_3x_1+x_4x_1-2x_1x_5+2x_1-x_2^2+2x_2x_3-x_2x_4+2x_2x_5+2x_2-2x_3^2+2x_3x_4-x_3x_5-x_3-x_4^2+2x_4x_5+2x_4-2x_5^2+2x_5, -1-2x_1^2+2x_1x_2-x_3x_1-x_4x_1+x_1x_5+x_1+2x_2x_3+x_2x_4+x_2x_5+2x_2+x_3x_4+x_3-2x_4^2-x_4+x_5^2-x_5, x_2-x_3+x_4-x_5+1)\subset
   \Q[x_1,x_2,x_3,x_4,x_5]$.

   \item[8.]
   $f^{(8)}=(y^2+(\alpha_1-\alpha_2\alpha_4)y+\alpha_2\alpha_5+\alpha_3+\alpha_5)(y+\alpha_3\alpha_5+\alpha_2\alpha_4\alpha_3)$,\\
   $I^{(8)}=(2+x_1^2+2x_2x_1+x_3x_1+2x_1x_4-2x_1x_5+2x_1-x_2^2+2x_3x_2+2x_4x_2+2x_2x_5-x_2-2x_3x_4+x_3x_5+x_3-x_4^2-x_4x_5-x_4+x_5^2-2x_5, 1-x_1^2+2x_2x_1-x_3x_1+2x_1x_4+2x_1x_5-2x_1+2x_2^2-x_3x_2-x_4x_2-x_2x_5+2x_2+2x_3^2-2x_3x_4+2x_3x_5-2x_4^2+2x_4x_5-x_4-x_5^2+x_5, 1+2x_2x_1+x_3x_1+x_1x_4+2x_1x_5+x_2^2+x_4x_2-2x_2x_5-x_3^2+x_3x_4-x_3x_5-x_3+x_4x_5-x_4+x_5^2-x_5, x_1^2+2x_2x_1+2x_3x_1+2x_1x_4+x_1x_5-2x_1+2x_2^2+x_3x_2+2x_4x_2-2x_2-2x_3^2+2x_3x_4-2x_3-2x_4^2-x_4x_5+x_5^2, x_1-2x_2-2x_3+2x_4+x_5-2)\subset
   \Q[x_1,x_2,x_3,x_4,x_5]$.

   \item[9.]
   $f^{(9)}=(y^2+(\alpha_1-\alpha_2\alpha_4)y+\alpha_2\alpha_5+\alpha_3+\alpha_5)(y^2+y(1+\alpha_3-\alpha_2+\alpha_3\alpha_5)+\alpha_4+\alpha_3-\alpha_1\alpha_5)(y+\alpha_3\alpha_5+\alpha_2\alpha_4\alpha_3)$,\\
   $I^{(9)}=I^{(8)}\subset \Q[x_1,x_2,x_3,x_4,x_5]$.

   \item[10.]
   $f^{(10)}=(y+\alpha_1+\alpha_2+\alpha_6+\alpha_3\alpha_4+\alpha_2\alpha_5\alpha_6)(y+\alpha_2\alpha_6-\alpha_1\alpha_5+2-\alpha_3+\alpha_4)$,\\
   $I^{(10)}=(-1+2x_1^2+x_1x_2+2x_1x_3-x_1x_4+x_1x_5-2x_1x_6+2x_1-x_2^2+x_3x_2+2x_4x_2-2x_2+x_3^2-2x_3x_5+2x_3x_6-2x_4^2-x_4x_5+2x_4x_6+x_4-2x_5x_6-2x_5+x_6^2+2x_6, -2x_1^2-x_1x_5-2x_1-2x_2^2+x_3x_2-x_4x_2-x_2x_6-x_3^2+2x_4x_3+2x_3x_5-x_3+2x_4^2-2x_4-x_5^2-x_5x_6+x_5+2x_6^2, -1+x_4-x_5+x_6+x_5x_6-2x_4x_6+2x_4x_5-2x_3x_6+x_3x_5+x_4x_3-x_2x_6+2x_2x_5+x_3x_2-x_1x_6-x_1x_5-2x_1x_4-x_1x_3-x_1x_2+2x_2-2x_1^2+x_2^2+2x_3^2-x_4^2+x_5^2+x_6^2, 2-x_1^2-2x_1x_3-2x_1x_4-2x_1x_5+x_1x_6+x_1-x_3x_2+2x_4x_2-2x_2x_6-x_2+2x_3^2-x_4x_3+2x_3x_5+x_3x_6+x_3+x_4^2+x_4x_5-x_4-2x_5^2+2x_5x_6+x_5-x_6, x_1x_2-2x_1x_4-x_1x_5-x_1x_6-x_1-x_3x_2-2x_4x_2-2x_2x_5-2x_2x_6-x_3^2-x_4x_3-2x_3x_5-2x_3x_6-x_3-2x_4x_5-2x_4x_6-2x_4-x_5^2+2x_5x_6-x_5+2x_6^2+x_6, -2x_1+x_2+x_3-2x_4+2x_5+x_6-2)\subset
   \Q[x_1,x_2,x_3,x_4,x_5,x_6]$.
\end{itemize}

\end{document}